\begin{document}

\bibliographystyle{plainnat}
\newcommand{\PS}{\mathcal{P}}
\newcommand{\id}{\mbox{Id}}
\newcommand{\supp}{\mbox{supp}}
\newcommand{\R}{\mathbf{R}}
\newcommand{\E}{\mathbb{E}}
\renewcommand{\P}{\mathbb{P}}
\newcommand{\bR}{\partial\mathbf{R}}
\newcommand{\Rk}{\R^k}  
\renewcommand{\S}{\mathbf{S}}  
\newcommand{\iS}{\S\setminus \S_0}  
\newcommand{\Cp}{\I\Sn }  
\newcommand{\iDelta}{\Delta\setminus\Delta_0}
\def\cL{\mathcal{L}}
\def\cA{\mathcal{A}}

\newtheorem{theorem}{Theorem}
\newtheorem{corollary}{Corollary}
\newtheorem{question}{Question}
\newtheorem{conjecture}{Conjecture}
\newtheorem{lemma}{Lemma}
\newtheorem{proposition}{Proposition}
\newtheorem{definition}{Definition}
\newtheorem{remark}{Remark}
\newtheorem{hypothesis}[theorem]{Hypothesis}
\newtheorem{example}[theorem]{Example}
\newcommand{\nz}{\hfill\break}
\newcommand{\lin}{\mbox{lin }}

\title[Persistence in fluctuating environments]{Persistence in fluctuating environments}

\author[S. J. Schreiber]{Sebastian J. Schreiber}
\address{Department of Evolution and Ecology and the Center for Population Biology\\
University of California, Davis, California 95616}
\email{sschreiber@ucdavis.edu}
\author[M. Bena\"{\i}m]{Michel Bena\"{\i}m}
\address{Institut de Math\'ematiques, Universit\'e de Neuch\^atel, Switzerland}
\email{michel.benaim@unine.ch}
\author[K. Atchad\'{e}]{Kolawol\'{e} A. S. Atchad\'{e}}
\address{Institut de Math\'ematiques, Universit\'e de Neuch\^atel, Switzerland}
\email{kolawole.atchade@unine.ch}
\begin{abstract}
Understanding under what conditions interacting populations, whether they be plants, animals, or viral particles, coexist is a question of theoretical and practical importance in population biology. Both biotic interactions and environmental fluctuations are key factors that can facilitate or disrupt coexistence. To better understand this interplay between these deterministic and stochastic forces, we develop a mathematical theory  extending the nonlinear theory of permanence for deterministic systems to stochastic difference and differential equations. Our condition for coexistence requires that there is a fixed set of weights associated with the interacting populations and this weighted combination of populations' invasion rates is positive for any (ergodic) stationary distribution associated with a subcollection of populations.  Here, an invasion rate corresponds to an average per-capita growth rate along a stationary distribution. When this condition holds and there is sufficient noise in the system, we show that the populations approach a unique positive stationary distribution. Moreover, we show that our coexistence criterion is robust to small perturbations of the model functions.  Using this theory, we illustrate that (i) environmental noise enhances or inhibits coexistence in communities with rock-paper-scissor dynamics depending on correlations between interspecific demographic rates, (ii) stochastic variation in mortality rates has no effect on the coexistence criteria for discrete-time Lotka-Volterra communities, and (iii) random forcing can promote genetic diversity in the presence of exploitative interactions.
\end{abstract}

\maketitle
\centerline{\large Submitted to \emph{Journal of Mathematical Biology}}
\vskip 0.1in

\centerline{\emph{One day is fine, the next is black. --The Clash}}

\section{Introduction}

The interplay between biotic interactions and environmental fluctuations plays a crucial role in determining species richness and genetic diversity~\citep{gillespie-73,chesson-warner-81,turelli-81,chesson-94,ellner-sasaki-96,abrams-etal-98,bjornstad-grenfell-01,kuang-chesson-08,kuang-chesson-09}. For example, competition for limited resources~\citep{gause-34} or sharing common predators~\citep{holt-77} may result in  species or genotypes being displaced. However, random forcing of these systems can reverse these trends and, thereby, enhance diversity~\citep{gillespie-guess-78,chesson-warner-81,abrams-etal-98}. Conversely,  differential predation can mediate coexistence between competitors~\citep{paine-66,holt-etal-94,chesson-kuang-08}, yet environmental fluctuations can disrupt this coexistence mechanism. A fruitful approach to study this interplay  is developing stochastic difference or differential equations and analyzing the long-term behavior of the probability distribution of the population sizes~\citep{turelli-81,chesson-82,gard-84,chesson-ellner-89,ellner-89,gyllenberg-etal-94a,gyllenberg-etal-94b,dcds-07,benaim-etal-08}.

An intuitive approach to the problem of coexistence is given by considering the average per-capita growth rate of a population when rare~\citep{turelli-78,gard-84,chesson-ellner-89}.  When this growth rate is positive, the population can increase and ``invade'' the system. For pairwise interactions, one expects that coexistence is ensured if each population can invade when it is rare and the other population is common. Indeed,
\citet{gard-84} and \citet{chesson-ellner-89} have shown for predator-prey interactions and competitive interactions that ``mutual invasibility'' ensures coexistence in the sense of stochastic boundedness~\citep{chesson-78,chesson-82}:  the long-term distribution of each population is bounded below by a positive random variable. Going beyond pairwise interactions, this mutual invasibility criterion suggests that coexistence should occur if a missing population can invade any subcommunity of the interacting populations. Surprisingly, this criterion  false even for deterministic systems. \cite{may-leonard-75} showed with numerical simulations that  competing species exhibiting a rock-paper-scissor dynamic need not coexist despite every subcommunity being invadable by a missing species.

Starting in  the late 1970s, mathematicians developed a coexistence theory for deterministic models that could handle rock-paper-scissor  type dynamics~\citep{schuster-etal-79,hofbauer-81,hutson-84,butler-waltman-86,hofbauer-so-89,hutson-schmitt-92,jansen-sigmund-98,jde-00,garay-hofbauer-03,nonlinearity-04,jtb-06}. Their notion of coexistence, known as permanence or uniform persistence, ensures that populations coexist despite frequent small perturbations or rare large perturbations~\citep{jansen-sigmund-98,jtb-06}. A sufficient condition for permanence is the existence of a fixed set of weights associated with the interacting populations such that this weighted combination of populations' invasion rates is positive for any invariant measure associated with a sub-collection of populations~\citep{hofbauer-81,jde-00,garay-hofbauer-03}. Conversely, if there is a convex combination of the invasion rates that is negative for all invariant measures associated with a sub-collection of populations, then the community has one or more populations that is extinction prone~\citep{garay-hofbauer-03,nonlinearity-04}.

While environmental stochasticity is often cited as a motivation for the concept of permanence~\citep{hutson-schmitt-92,jansen-sigmund-98}, only recently has the effect of environmental stochasticity on permanent systems been investigated. \citet{benaim-etal-08} found if a deterministic continuous-time model satisfies the aforementioned permanence criterion, then, under a suitable non-degeneracy assumption, the corresponding stochastic differential equation with a small diffusion term  has a positive stationary distribution concentrated on the positive global attractor of the deterministic system. Consequently, permanent systems persist despite continual, but on average small, random perturbations. Conversely, if the deterministic dynamics satisfies the impermanence criterion, then the stochastic dynamics almost surely converges to the boundary of the state space. This asymptotic loss of one or more species occurs even if there is a positive attractor for the underlying deterministic dynamics.

For many systems, stochastic perturbations may not be small and these perturbations may not be best described by stochastic differential equations \citep{turelli-78}. Here, in sections~\ref{sec:discrete} and \ref{sec:conts}, we develop a natural generalization of the permanence criteria for stochastic difference  and differential equations with arbitrary levels of noise. The proofs of these results are presented in the Appendices. In section~\ref{sec:apps}, we develop applications of these results to competitive lottery models, discrete-time Lotka-Volterra models with environmental disturbances, and stochastic replicator equations. In particular, we show how environmental stochasticity can enhance or disrupt diversity in these models.

\section{Discrete time models\label{sec:discrete}}

\subsection{The models}
We study the dynamics of $k$ interacting  populations in a random environment. Let $X^i_t$ denote the density of the $i$-th population at time $t$ and $X_t=(X_t^1,\dots,X_t^k)$ the vector of population
densities at time $t$.\footnote{For sequences of random vectors, we use subscripts to denote time $t$ and superscripts to denote components of the vector. For all other vectors, we use subscripts to denote components of the vector.} To account for environmental fluctuations,   we introduce a random variable $\xi_t$ that represents the state of the environment (e.g. temperature, nutrient availability) at time $t$. The fitness $f_i(X_t,\xi_{t+1})$ of population $i$ at time $t$ depends on the population state and environmental state at time $t+1$. Under these assumptions, we arrive at the following stochastic difference equation:
\begin{equation}\label{eq:discrete}
X_{t+1}=f(X_t,\xi_{t+1})\circ X_t
\end{equation}
where  $f(x,\xi)=(f_1(x,\xi),\dots,f_k(x,\xi))$ is the vector of fitnesses and $\circ$ denotes the Hadamard product i.e. component-wise multiplication.

Regarding \eqref{eq:discrete}, we make four assumptions:
\begin{description}
\item [A1] There exists a compact set $\S$ of $\R^k_+=\{x\in\R^k: x_i\ge 0\}$ such that $X_t\in \S$ for all $t\ge 0$.
\item [A2] $\{\xi_t\}_{t=0}^\infty$ is a sequence of i.i.d random variables independent of $X_0$ taking values in a probability space $E$ equipped with a $\sigma$-field and probability measure $m$.
\item [A3] $f_i(x,\xi)$ are strictly positive functions,  continuous in $x$ and measurable in $\xi.$
\item [A4]  For all $i$, $\sup_{x\in \S}\int (\log f_i(x,\xi))^2 \, m(d\xi)<\infty$
\end{description}
Assumption \textbf{A1} ensures that the populations remain bounded for all time. Assumptions \textbf{A2} and \textbf{A3}  imply that $\{X_t\}_{t=0}^\infty$ is a Markov chain on $\S$ and that $\{X_t\}_{t=0}^\infty$ is {\em Feller}, meaning that $Ph$, as defined below, is continuous whenever $h$  is continuous.  Assumption \textbf{A4} is a technical assumption meet by many models.

\subsection{Some ergodic theory}
In order to state our main results, we introduce some notation and review some basic concepts from ergodic theory. For any Borel set $A\subset \S$ and $x \in \S$, let
\[
\P_x[X_t\in A] = \P[X_t\in A \big|X_0=x].
\]
be the probability $X_t$ is in $A$ given that $X_0=x$. For various notions of convergence, it is useful to consider how the expected value of an ``observable'' (a function $h$ from $\S\to \R$)  depends on the dynamics of $X_t$. Give a bounded or nonnegative measurable function $h : \S \mapsto \R$, define
\[
\E[h(X_1)|X_0=x] = \int h(f(x,\xi)\circ x)m(d\xi)
\]
to be the expected value of $h$ in the next time step given that the current state of the population is $x$.  Let $P$ be the operator on bounded measurable functions defined by $Ph(x)=\E[h(X_1)|X_0=x]$. 

To understand the long-term statistical behavior of the population dynamics, it is useful to introduce  invariant probability measures. Roughly, a probability measure $\mu$ is invariant if the population initially follows the distribution of $\mu$, then it follows this distribution for all time i.e. if $\P[X_0\in A] = \mu(A)$ for all Borel sets $A\subset \S$, then $\P[X_t\in A]=\mu(A)$ for all $t$ and all Borel sets $A\subset \S$. One can phrase this invariance in terms of observables $h:\S\to \R$. If $X_0$ follows the distribution of $\mu$, then the expected value of $h(X_0)$ equals $\int_\S h(x)\,\mu(dx)$ and the expected value of $h(X_1)$ equals $\int_\S Ph(x)\,\mu(dx)$. It follows that a Borel probability measure $\mu$ is  \emph{invariant} for $\{X_t\}_{t=0}^\infty$ or $P$ if
\begin{equation}
\label{eq:inv}
\int_\S h(x) \,\mu(dx) = \int_\S Ph(x) \, \mu(dx)
\end{equation}
for all continuous bounded functions $h:\S \to\R$.  We let $\PS$ denote the space of Borel probability measures on $\S$

If $X_t$ initially follows the distribution of an invariant probability measures $\mu$, then Birkhoff's ergodic theorem implies that the temporal averages of an observable along a population trajectory converges with probability one. More precisely, if $h:\S\to \R$ is a measurable function with $\int_\S | h(x)| \,\mu(dx)<\infty$, then there exists a  measurable function $\widetilde h:\S\to \R$ such that  $\int_\S |\widetilde h(x)|\,\mu(dx)<\infty$ and 
\[
\lim_{t\to\infty} \frac{1}{t}\sum_{s=0}^{t-1} h(X_s) = \widetilde h(X_0)
\]
with probability one. When $\widetilde h$ is a constant function for all bounded measurable $h$, $\mu$ is called an \emph{ergodic probability measure} in which case 
\begin{equation}\label{eq:ergodic}
\lim_{t\to\infty} \frac{1}{t}\sum_{s=0}^{t-1} h(X_s) = \int_\S h(x)\,\mu(dx)
\end{equation}
with probability one. Since $\int_\S h(x)\,\mu(dx)$ corresponds to the expected value of $h(X_0)$, equation \eqref{eq:ergodic} can be interpreted as a law of large numbers for $X_t$. 

While the Birkhoff ergodic theorem provides a relatively complete picture of the long-term statistical behavior of $X_t$, it only does so when $X_0$ is initially distributed like an invariant probability measure. However, as we are interested in the long-term behavior of $X_t$ for any positive initial condition $X_0$, new results are needed that require the concept of an invasion rate.

\subsection{Invasion rates}

The \emph{expected per-capita growth rate at state $x$} of population $i$ is
 \[\lambda_i(x)= \int \log f_i (x,\xi) m(d\xi).\]
 When $\lambda_i(x)>0$, the $i$-th population tends to increase when the current population state is $x$.  When $\lambda_i(x)<0$, the $i$-th population tends to decrease when the current population state is $x$.
 For an invariant probability measure $\mu$, we define the \emph{invasion rate of species $i$ with respect to $\mu$} to be
\[
\lambda_i(\mu)=\int_\S \lambda_i(x)\,\mu(dx)
\]
The following proposition clarifies why $\lambda_i(\mu)$  is called an invasion rate. Its proof is in Appendix A.

\begin{proposition}
\label{prop:invasion}  Let $\mu$ be an invariant  probability measure and $i \in \{1, \ldots, k \}.$ Then there exists a bounded  map $\hat{\lambda}_i:\S\to \R$ such that:
\begin{enumerate}
\item[(i)] With probability one and for $\mu$-almost every $x\in \S$
\[
\lim_{t\to\infty} \frac{1}{t} \sum_{s=0}^{t-1} \log f_i(X_s,\xi_{s+1}) =\hat{ \lambda}_i(x)\mbox{ when }X_0=x;
\]

\item[(ii)] $\int_{\S} \hat{\lambda}_i(x) \mu(dx) = \lambda_i(\mu);$ Furthermore if $\mu$ is ergodic, then $\hat{\lambda}_i(x) = \lambda_i(\mu)$ $\mu$-almost surely.
\item[(iii)] If $\mu(\{x \in \S : \: x_i > 0 \}) = 1$, then $\lambda_i(\mu)=0.$
\end{enumerate}
\end{proposition}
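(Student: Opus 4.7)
My plan is to obtain (i)--(ii) by the standard decomposition of the time averages of $\log f_i(X_s,\xi_{s+1})$ into a conditional mean and a martingale difference, applying Birkhoff to the first and a martingale law of large numbers to the second. Note that $\lambda_i(x)$ is bounded on $\S$ by Cauchy--Schwarz together with \textbf{A4}, so Birkhoff's ergodic theorem for the Feller Markov chain $\{X_t\}$ started from $\mu$ yields a bounded measurable $\hat\lambda_i:\S\to\R$ with
\[
\frac{1}{t}\sum_{s=0}^{t-1}\lambda_i(X_s)\ \longrightarrow\ \hat\lambda_i(X_0)\qquad \P_x\text{-a.s. for $\mu$-a.e.\ }x,
\]
with $\int\hat\lambda_i\,d\mu = \int\lambda_i\,d\mu = \lambda_i(\mu)$ and $\hat\lambda_i\equiv \lambda_i(\mu)$ in the ergodic case. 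This gives the $\lambda_i(x)$-piece of (i)--(ii).

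For the martingale piece, set $D_s := \log f_i(X_s,\xi_{s+1}) - \lambda_i(X_s)$ and $\mathcal{F}_s := \sigma(X_0,\xi_1,\ldots,\xi_s)$. Since $\xi_{s+1}$ is independent of $\mathcal{F}_s$ with law $m$, $\E[D_s\mid \mathcal{F}_s]=0$, and \textbf{A4} gives $\E[D_s^2\mid \mathcal{F}_s]\le C:=\sup_{x\in\S}\int(\log f_i(x,\xi))^2\,m(d\xi)<\infty$. Thus $\sum_{s\ge 1}D_s/s$ is an $L^2$-bounded martingale (its increment variances form a summable series) and converges $\P_x$-a.s.\ for every $x$; Kronecker's lemma then yields $\frac1t\sum_{s=0}^{t-1}D_s\to 0$ a.s. Combining with the previous limit gives (i), and integration against $\mu$ (together with ergodicity when available) gives (ii).

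For (iii) I would first reduce to the ergodic case by the ergodic decomposition $\mu=\int\mu_\omega\,d\pi(\omega)$: the Fubini identity $1=\mu(\{x_i>0\})=\int\mu_\omega(\{x_i>0\})\,d\pi(\omega)$ forces $\mu_\omega(\{x_i>0\})=1$ for $\pi$-a.e.\ $\omega$, and since $\lambda_i(\mu)=\int \lambda_i(\mu_\omega)\,d\pi(\omega)$ it suffices to treat ergodic $\mu$. Telescoping $X^i_{t+1}=f_i(X_t,\xi_{t+1})X^i_t$ gives
\[
\frac{1}{t}\log X^i_t - \frac{1}{t}\log X^i_0 = \frac{1}{t}\sum_{s=0}^{t-1}\log f_i(X_s,\xi_{s+1}) \ \longrightarrow\ \lambda_i(\mu) \qquad \P_\mu\text{-a.s.},
\]
by (i)--(ii). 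Compactness of $\S$ (\textbf{A1}) bounds $X^i_t\le M$, so $(\log X^i_t)/t\le (\log M)/t\to 0$ and hence $\lambda_i(\mu)\le 0$.

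The delicate step, and the main obstacle, is the matching lower bound: $\log X^i_t$ may well be unbounded below, so one cannot directly invoke stationarity of $\log X^i_t$ to conclude $\E_\mu[\log f_i(X_0,\xi_1)]=0$. The trick is to argue on $X^i_t$ itself rather than on its logarithm. If $\lambda_i(\mu)<0$, then the displayed limit forces $\log X^i_t\to-\infty$, i.e., $X^i_t\to 0$ $\P_\mu$-a.s. Since $0\le X^i_t\le M$, dominated convergence gives $\E_\mu[X^i_t]\to 0$, while $\mu$-invariance gives $\E_\mu[X^i_t]=\int x_i\,\mu(dx)>0$ (the integrand is strictly positive $\mu$-a.e.\ by hypothesis), a contradiction. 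Hence $\lambda_i(\mu)=0$, which completes (iii).
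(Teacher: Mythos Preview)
Your treatment of (i)--(ii) is essentially the paper's: both split $\log f_i(X_s,\xi_{s+1})$ into the conditional mean $\lambda_i(X_s)$ (handled by Birkhoff for the stationary chain) and a square-integrable martingale difference (handled by the martingale strong law). The paper packages this as a separate lemma but the content is the same.

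For (iii) the arguments genuinely diverge. The paper does \emph{not} reduce to the ergodic case. Instead it applies the Poincar\'e recurrence theorem: for each $\eta>0$ with $\mu(\{x:x_i\ge\eta\})>0$, $\mu$-almost every trajectory starting in $\{x_i\ge\eta\}$ returns there infinitely often, so along that return subsequence $\log X^i_t$ stays bounded and $\frac{1}{t}\log X^i_t\to 0$; since the full limit $\hat\lambda_i(x)$ exists by (i), it must equal $0$. Taking a union over $\eta=1/n$ yields $\hat\lambda_i=0$ $\mu$-a.e.\ on $\{x_i>0\}$, hence $\lambda_i(\mu)=\int\hat\lambda_i\,d\mu=0$. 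Your route---ergodic decomposition, then the contradiction between the stationary value $\E_\mu[X^i_t]=\int x_i\,d\mu>0$ and the dominated-convergence limit $\E_\mu[X^i_t]\to 0$ forced by $\lambda_i(\mu)<0$---is equally valid. The recurrence argument is a bit more direct (no decomposition, and it gives the pointwise conclusion $\hat\lambda_i=0$ $\mu$-a.e.), while your argument is pleasantly self-contained and makes the obstruction transparent: a stationary positive mean cannot coexist with almost-sure decay to zero.
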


When $\mu$ is ergodic, $\lambda_i(\mu)$ is the long-term time average of the per-capita growth rate of population $i$. Moreover, since each of the set $\{x_i = 0\}$ is invariant under the  dynamics in (\ref{eq:discrete}), there exists a set $\supp(\mu) \subset \{1,\ldots, k\}$ such that  $x_i > 0$ if and only if $i \in \supp(\mu)$ for $\mu$-almost all $x$. One can interpret $\supp(\mu)$ as the set of populations supported by $\mu$. Quite intuitively, Proposition~\ref{prop:invasion} implies that the long-term average of the per-capita growth is zero for all populations supported by $\mu$ i.e.  $\lambda_i(\mu)=0$ for all $i\in \supp(\mu)$.

\subsection{Persistence}
To quantify persistence, there are two ways to think about the asymptotic behavior of $\{X_t\}_{t=0}^\infty$. First, one can ask what is the distribution of $X_t$ far into the future. For example, what is the probability that  the population density of each state is greater than $\epsilon$ in the long term  i.e. $\P[X_t\ge (\epsilon,\dots,\epsilon)]$ for large $t$? The answer to this question provides information what happens across many independent realizations of the population dynamics. Alternatively, one might be interested about the statistics associated with a single realization of the process i.e. a single time series. For instance, one could ask what fraction of the time was the density of each population state greater than $\epsilon$? To answer this question, it is useful to introduce the \emph{occupation measures}
\[\Pi_t = \frac{1}{t} \sum_{s = 1}^t \delta_{X_s}\]
where $\delta_{X_s}$ denotes a Dirac measure at $X_s$ i.e. $\delta_{X_s}(A)=1$ if $X_s\in A$ and $0$ otherwise for any (Borel) set $A\subset \S$. One can interpret $\Pi_t(A)$ as the proportion of time the population spends in $A$ up to time $t$.

Our first theorem addresses persistence from the second perspective. To state this theorem, for $\eta>0$, let $\S_\eta=\{x\in S: x_i\le \eta$ for some $i\}$ be the set of the states where at least one population has an abundance less than or equal to $\eta$. $\S_0$ corresponds to the states where one or more populations is extinct.

\begin{theorem}[Persistence]
\label{thm:persistence}
Assume that one of the following equivalent conditions hold:
\begin{enumerate}
\item[(i)] For all invariant probability measures $\mu$  supported on $ \S_0$,
\[
\lambda_*(\mu) := \max_i  \lambda_i(\mu)>0
\]
\item[(ii)]
There exists $p\in \Delta$ such that
\[
\sum_i p_i \lambda_i (\mu) >0
\]
for all ergodic probability measures $\mu$  supported by $\S_0.$
\end{enumerate}
Then for all $\epsilon>0$ there exists $\eta>0$ such that
\[
\limsup_{t\to\infty} \Pi_t (\S_\eta)\le \epsilon \mbox{ almost surely}
\]
whenever $X_0=x\in \iS$.
\end{theorem}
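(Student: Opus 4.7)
The plan is to combine a Lyapunov-function argument with an analysis of the weak-$*$ limit points of the empirical occupation measures $\Pi_t$. First, I would establish the equivalence (i) $\Leftrightarrow$ (ii). The direction (ii) $\Rightarrow$ (i) is immediate from ergodic decomposition together with $\max_i \lambda_i(\mu) \ge \sum_i p_i \lambda_i(\mu)$ for any $p \in \Delta$. For (i) $\Rightarrow$ (ii), I would note that the set $\mathcal{M}_0$ of invariant probability measures supported on $\S_0$ is weak-$*$ compact and convex, $\Delta$ is compact and convex, and $(p,\mu)\mapsto \sum_i p_i \lambda_i(\mu)$ is bilinear. Continuity and boundedness of each $x\mapsto \lambda_i(x)$ (and hence of $\mu\mapsto \lambda_i(\mu)$ on $\PS$) follow from A3--A4 via the Vitali convergence theorem applied to the family $\{\log f_i(x,\cdot)\}_x$, which is $L^2(m)$-bounded uniformly in $x$ and therefore uniformly integrable. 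Sion's minimax theorem then converts $\inf_{\mu \in \mathcal{M}_0}\max_i \lambda_i(\mu)>0$ into the existence of a single $p$ witnessing (ii).

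Using $p$ from (ii), I would introduce the Lyapunov function $V(x)=\sum_i p_i \log x_i$ on $\iS$ and set $S(x)=\sum_i p_i \lambda_i(x)$, a bounded continuous function on $\S$. The key identity $PV(x)-V(x)=S(x)$ is a direct computation from the definition of $P$. Since A3 keeps $X_t$ in $\iS$ whenever $X_0 \in \iS$, telescoping yields
\[
V(X_t)-V(X_0)=\sum_{s=0}^{t-1} S(X_s) + M_t,
\]
where $M_t$ is a martingale whose conditional increment variances are uniformly bounded by A4. The martingale strong law then gives $M_t/t\to 0$ almost surely, and boundedness of $V$ from above on the compact set $\S$ forces $\limsup_t \int_\S S\,d\Pi_t \le 0$ almost surely. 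The Feller property ensures every weak-$*$ limit point $\mu^*$ of $\Pi_t$ is an invariant probability measure (an occupation-measure version of Krylov--Bogolyubov), so continuity of $S$ delivers $\int S\,d\mu^* \le 0$ almost surely.

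Ergodic decomposition writes any invariant $\mu$ as $\mu(\S_0)\mu^b+(1-\mu(\S_0))\mu^i$ with $\mu^b \in \mathcal{M}_0$ and $\mu^i$ supported on $\iS$. By Proposition~\ref{prop:invasion}(iii), $\int S\,d\mu^i=0$, while the first step together with compactness of $\mathcal{M}_0$ gives $c:=\inf_{\nu \in \mathcal{M}_0}\int S\,d\nu>0$. Hence $\int S\,d\mu \ge c\,\mu(\S_0)$ for every invariant $\mu$, and so every limit point $\mu^*$ satisfies $\mu^*(\S_0)=0$ almost surely. To promote this to the claimed uniform-in-$\omega$ estimate, I would introduce the weak-$*$ compact set $\widetilde{\mathcal{M}}:=\{\mu \in \PS : \mu\text{ invariant},\ \int S\,d\mu \le 0\}$, which is a deterministic set containing every limit point of $\Pi_t$ and on which $\mu(\S_0)=0$. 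The functions $g_\eta(\mu):=\mu(\S_\eta)$ are upper semi-continuous (as $\S_\eta$ is closed) and decrease pointwise to $0$ on $\widetilde{\mathcal{M}}$ as $\eta\downarrow 0$; a Dini-type argument on the compact set $\widetilde{\mathcal{M}}$ then gives $\sup_{\mu \in \widetilde{\mathcal{M}}} \mu(\S_\eta)\to 0$. Fixing $\eta$ so that this supremum is $\le \epsilon$ and applying upper semi-continuity along any subsequence realising $\limsup_t \Pi_t(\S_\eta)$ finishes the proof. I expect this Dini-type uniformity to be the main obstacle, since it is what converts the measure-by-measure statement $\mu^*(\S_0)=0$ into a single deterministic $\eta$ valid for all $\omega$; it rests on the compactness of $\widetilde{\mathcal{M}}$ and on $S$ being continuous on all of $\S$ (not merely on $\iS$, where $V$ itself is singular).
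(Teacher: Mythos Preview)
Your argument is correct and mirrors the paper's proof closely: the equivalence via minimax, the martingale strong law controlling $\frac{1}{t}\log X_t^i$, the Feller/Krylov--Bogolyubov identification of weak-$*$ limit points of $\Pi_t$ as invariant measures, the boundary/interior decomposition of those limit points using Proposition~\ref{prop:invasion}(iii), and the compactness argument producing a single deterministic $\eta$ are exactly the ingredients of the paper's lemmas. The only cosmetic differences are that the paper works coordinate-by-coordinate (deducing $\lambda_i(\mu^*)\le 0$ for every $i$ directly from $\limsup_t \tfrac{1}{t}\log X_t^i\le 0$, so condition~(i) is used without ever invoking the weights $p$), while you package the same computation through the single Lyapunov function $V=\sum_i p_i\log x_i$; and the paper phrases the uniformity-in-$\eta$ step as a proof by contradiction on a weak-$*$ convergent sequence rather than as an explicit Dini argument on $\widetilde{\mathcal{M}}$.
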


Theorem~\ref{thm:persistence} implies that fraction of time spent by the populations in $\S_\eta$ goes to zero as $\eta$ goes to zero. Theorem~\ref{thm:persistence}, however, does not ensure that there is a unique positive stationary distribution. For this stronger conclusion, there has to be sufficient noise in the system to ensure after enough time any positive population state can move close to any other positive population state. More precisely, given $\eta > 0,$ we say that $\{X_t\}$ is {\em irreducible} over $\S \setminus \S_\eta$ if there exists a probability measure $\Phi$ on $\S \setminus \S_{\eta}$ such that
for all $x \in \S \setminus \S_\eta$ and every Borel set $A \subset \S \setminus  \S_\eta$ there exists $n \geq 1$  (depending on $x$ and $A$) such that
\[\P_x(X_n \in A) > 0\]
whenever $\Phi(A) > 0.$

\begin{theorem}[Uniqueness]\label{thm:uniqueness} Assume that   $\{X_t\}$ is irreducible over $\S \setminus \S_{\eta}$  for all $\eta > 0$, and that the assumption of Theorem~\ref{thm:persistence} holds. Then there exists a unique invariant probability measure $\pi$ such that $\pi (\S_0)=0$ and
the occupation measures $\Pi_t$ converge almost surely to $\pi$ as $t\to\infty$, whenever $X_0=x\in \iS$.
\end{theorem}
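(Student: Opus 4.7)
The plan is three-fold: combine persistence with the Feller property to extract tightness and existence of an invariant measure on $\iS$; use the irreducibility hypothesis together with Birkhoff's theorem to force uniqueness of such a measure; and then conclude weak convergence of $\Pi_t$ by elimination.

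For the first step, fix $X_0 = x \in \iS$. Since $\S$ is compact by \textbf{A1}, the occupation measures $\{\Pi_t\}$ are automatically tight in $\PS$, and assumption \textbf{A3} gives the Feller property, which guarantees that every weak-$*$ cluster point of $\Pi_t$ is $P$-invariant. Applying Theorem~\ref{thm:persistence} with $\epsilon = 1/n$ produces $\eta_n \downarrow 0$ such that $\limsup_t \Pi_t(\S_{\eta_n}) \le 1/n$ almost surely, so any cluster point $\mu$ satisfies $\mu(\S_{\eta_n}) \le 1/n$; letting $n \to \infty$ yields $\mu(\S_0) = 0$. In particular, an invariant $\pi$ with $\pi(\S_0) = 0$ exists.

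For uniqueness, I would argue by contradiction: suppose $\pi_1 \ne \pi_2$ are distinct ergodic invariant probability measures, each assigning mass zero to $\S_0$. By Birkhoff's theorem applied to a suitable bounded continuous $h$ separating them, the sets
\[
B_j = \Bigl\{x \in \S : \P_x\bigl(\Pi_t \Rightarrow \pi_j\bigr) = 1 \Bigr\}
\]
are disjoint Borel sets with $\pi_j(B_j) = 1$, and a time-shift argument shows they are $P$-absorbing, i.e.\ $\P_x(X_n \in B_j) = 1$ whenever $x \in B_j$. Choose $\eta > 0$ small enough that $\pi_j(B_j \cap (\S\setminus\S_\eta)) > 0$ for $j = 1,2$, and let $\Phi$ be the irreducibility measure on $\S \setminus \S_\eta$ furnished by the hypothesis. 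The crux is to show that $\Phi(B_1) > 0$ or $\Phi(B_2) > 0$; granted $\Phi(B_1) > 0$ say, pick any $x \in B_2 \cap (\S\setminus\S_\eta)$, apply irreducibility to produce some $n$ with $\P_x(X_n \in B_1) > 0$, and contradict the absorbing property of $B_2$ together with $B_1 \cap B_2 = \emptyset$.

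The main obstacle is the ``crux'' above: bridging from the potentially singular invariant measures $\pi_j$ to the irreducibility measure $\Phi$. My plan is to invoke the classical fact from Meyn--Tweedie theory on $\phi$-irreducible Markov chains that one may replace $\Phi$ by a maximal irreducibility measure $\psi$ equivalent to $\Phi$ on $\S\setminus\S_\eta$ and dominating every invariant probability measure concentrated there; then $\psi(B_j) \ge \pi_j(B_j \cap (\S\setminus\S_\eta)) > 0$ by absolute continuity, and irreducibility phrased with $\psi$ drives the contradiction. Once uniqueness of $\pi$ on $\iS$ is in hand, the first step already provides a tight family $\{\Pi_t\}$ with $\pi$ as the only possible cluster point, so $\Pi_t \Rightarrow \pi$ almost surely for every $X_0 = x \in \iS$, completing the theorem.
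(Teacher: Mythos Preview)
Your overall strategy is sound, but the justification of the ``crux'' step is flawed in two ways. First, the Meyn--Tweedie relation between the maximal irreducibility measure $\psi$ and invariant probabilities runs the \emph{opposite} direction to what you need: for a $\psi$-irreducible chain one has $\psi\ll\pi$ for every invariant probability $\pi$ (indeed, invariant probabilities are themselves maximal irreducibility measures), not $\pi\ll\psi$, so from $\pi_j(B_j\cap(\S\setminus\S_\eta))>0$ you cannot deduce $\psi(B_j)>0$ without first establishing recurrence. Second, and more basically, the maximal irreducibility measure is only defined once the chain is known to be $\phi$-irreducible on a fixed state space; the hypothesis here gives accessibility only from starting points $x\in\S\setminus\S_\eta$, and the chain does not remain in $\S\setminus\S_\eta$, so neither $\iS$ nor $\S\setminus\S_\eta$ is a state space on which the Meyn--Tweedie machinery applies out of the box.

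The paper closes precisely this gap with an additional lemma: using the persistence hypothesis and the bound $\lambda_*(\mu)\le 0$ for weak-$*$ limit points of $\Pi_t$, it shows there is a fixed $\eta>0$ such that $\P_x[X_t\notin\S_\eta\mbox{ for some }t]=1$ for every $x\in\iS$. This recurrence to $\S\setminus\S_\eta$, combined with the irreducibility hypothesis for that $\eta$, makes $\{X_t\}$ genuinely $\Phi$-irreducible on all of $\iS$, after which Meyn--Tweedie's uniqueness theorem (at most one invariant probability for a $\phi$-irreducible chain) applies directly; convergence of $\Pi_t$ then follows from your Step~1. Incidentally, your absorbing-set route can be salvaged by a short direct argument that avoids maximal irreducibility entirely: if $\Phi(B_1\cap(\S\setminus\S_\eta))=0$ then its complement in $\S\setminus\S_\eta$ has full $\Phi$-measure, and the irreducibility hypothesis forces any $x\in B_1\cap(\S\setminus\S_\eta)$ to reach that complement with positive probability, contradicting the absorbing property of $B_1$.
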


Theorem~\ref{thm:uniqueness} ensures the asymptotic distribution of one realization of the population dynamics is given by the positive stationary distribution $\pi$. Hence, $\pi$ provides information about the long-term frequencies that a population trajectory spends in any part of the population state space. To gain information about the distribution of $X_t$ across many realizations of the population dynamics, we need a stronger irreducibility condition. This stronger condition requires that after a fixed amount of time independent of initial condition, any positive population state can move close to any other positive population state. More precisely, we say that $\{X_t\}$ is {\em strongly irreducible} over $\S \setminus \S_\eta$ if there exists a probability measure $\Phi$ on $\S \setminus \S_\eta,$ an integer $n \geq 1$ and some number $0 < h \leq 1$ such that for all $x \in \S \setminus \S_\eta$ and every Borel set $A \subset  \S \setminus \S_\eta$
\[\P_x(X_n \in A) \geq h \Phi(A).\]
To state the next result given $\mu, \nu  \in \PS$ define
\[\|\mu - \nu\| = \sup_{B} |\mu(B) - \nu(B)|\] where the supremum is taken over all Borel sets $B \subset \iS.$

\begin{theorem}[Convergence in distribution]\label{thm:PHC} In addition to the assumptions of Theorem \ref{thm:uniqueness} assume that  $\{X_t\}$ is strongly irreducible over $\S \setminus \S_{\eta}$ for all $\eta > 0$.  Then
 the distribution of $X_t$ converges to $\pi$ as $t\to\infty$ whenever $X_0=x\in \iS;$ that is
\[ \lim_{t \rightarrow \infty} \|\P_x[X_t \in \cdot] - \pi\| = 0\mbox{ for all }x\in \iS.\]
\end{theorem}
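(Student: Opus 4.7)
The plan is to construct a coupling of two copies of the chain — one started from an arbitrary $x\in\iS$, the other from the invariant measure $\pi$ provided by Theorem~\ref{thm:uniqueness} — whose coupling time $\tau$ is almost surely finite. The coupling inequality $\|\P_x[X_t\in\cdot]-\pi\|\leq \P[\tau>t]$ then delivers the desired total variation convergence.

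The strong irreducibility hypothesis is precisely a Doeblin-type minorization on $\S\setminus\S_\eta$: for every $\eta>0$ there exist $n_\eta\geq 1$, $h_\eta\in(0,1]$, and a probability measure $\Phi_\eta$ on $\S\setminus\S_\eta$ with $P^{n_\eta}(x,A)\geq h_\eta\Phi_\eta(A)$ uniformly over $x\in\S\setminus\S_\eta$ and Borel $A\subset\S\setminus\S_\eta$. Fix $\epsilon>0$. Using Theorem~\ref{thm:persistence} I first choose $\eta>0$ small enough that, from every initial condition in $\iS$, $\limsup_t\Pi_t(\S_\eta)\leq\epsilon$ almost surely; shrinking $\eta$ further (using $\pi(\S_0)=0$) I can also arrange $\pi(\S\setminus\S_\eta)\geq 1-\epsilon$. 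Splitting the $n_\eta$-step kernel gives the decomposition $P^{n_\eta}(x,\cdot)=h_\eta\Phi_\eta(\cdot)+(1-h_\eta)R(x,\cdot)$ on $\S\setminus\S_\eta$ for a residual Markov kernel $R$.

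Next I define a joint Markov chain $(X_t,Y_t)$ on $\S\times\S$ inductively over blocks of length $n_\eta$, preserving the correct marginal laws. At the start of each block, if both coordinates lie in $\S\setminus\S_\eta$, I flip an independent coin of success probability $h_\eta$; on success I draw a single point from $\Phi_\eta$ and set both $X_{t+n_\eta}=Y_{t+n_\eta}$ equal to it (merging the chains), and on failure I draw the two updates independently from $R(X_t,\cdot)$ and $R(Y_t,\cdot)$. If instead one coordinate lies in $\S_\eta$, I run the block with independent copies of the noise $\xi$. From the first merge time $\tau$ onward, I force $X_s=Y_s$ by reusing the same $\xi_s$ in both copies. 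The marginals remain distributed as the chain started from $x$ and from $\pi$, respectively.

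It remains to show $\P[\tau<\infty]=1$ for every $x\in\iS$. Applying Theorem~\ref{thm:uniqueness} to each marginal, $\Pi_t\to\pi$ almost surely for both copies, so, since $\pi(\S\setminus\S_\eta)\geq 1-\epsilon$, the asymptotic fraction of block starts $jn_\eta$ at which both marginals lie in $\S\setminus\S_\eta$ is at least $1-2\epsilon>0$ almost surely. Across such blocks the minorization coins are, by construction, conditionally independent given the past, each with success probability $h_\eta$, so Borel–Cantelli forces some coin to succeed in finite time. The main obstacle is this last step: making the conditional independence of the coin flips across successive non-coupled blocks precise in the inductive construction, and converting the marginal occupation-measure bound of Theorem~\ref{thm:persistence} into the simultaneous-return statement for the joint (initially product) chain — where the independence of the two starting conditions before $\tau$ is used to additively control the fraction of blocks where at least one coordinate sits in $\S_\eta$.
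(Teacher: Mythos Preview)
The paper's argument is a two-line citation rather than an explicit construction: strong irreducibility together with the recurrence to $\S\setminus\S_\eta$ established earlier in the appendix makes $\{X_t\}$ Harris recurrent and aperiodic on $\iS$, Theorem~\ref{thm:uniqueness} supplies positivity, and Orey's theorem (Meyn and Tweedie, Theorem~18.1.2) then gives the total variation convergence. Your Doeblin coupling is precisely the standard engine behind Orey's theorem, so the two routes are close in spirit; yours trades a black-box citation for a self-contained construction.

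The obstacle you flag at the end --- converting marginal recurrence to simultaneous recurrence for the joint chain --- is in fact already handled by your union bound, which requires no independence between the two coordinates. The actual gap lies one step earlier, and you do not mention it: from $\Pi_t\to\pi$ you infer that the fraction of \emph{block-start times} $jn_\eta$ with $X_{jn_\eta}\in\S_\eta$ is asymptotically at most $\epsilon$, but $\Pi_t$ averages over all times and its limit does not automatically transfer to the $n_\eta$-skeleton (a chain of period $n_\eta$ would be a counterexample). You can repair this either by proving aperiodicity directly --- for small $\eta$ the set $\S\setminus\S_\eta$ carries $\pi$-mass close to $1$ and therefore meets every putative cyclic class, whereas the uniform minorization forces $\Phi_\eta$ to concentrate on a single class, contradicting period greater than one --- which is exactly what licenses the paper's appeal to Orey; or, more in keeping with your construction, by abandoning the fixed grid and attempting a coupling at the first time $t\geq T_k+n_\eta$ at which both chains lie in $\S\setminus\S_\eta$, a time that is almost surely finite by the full-time occupation bound from Theorem~\ref{thm:uniqueness} and the strong Markov property.
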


\begin{remark}
{\rm   Suppose that there exists a nonzero continuous function  $\rho : \, \iS  \times \iS \mapsto  \R_+$  an integer $n \geq 1$ and a probability $\nu$ over $\iS$  such that for all $x \in \iS$ and Borel set $A \subset  \iS$
\[\P_x[X_n \in A] \geq \int_A \rho(x,y) \nu(dy).\] Then  $\{X_t\}$ is strongly irreducible over $\S \setminus \S_{\eta}$ for all $\eta>0$.
}\end{remark}
\subsection{Robust persistence}

Under an additional assumption, our main condition ensuring persistence is robust to small variations of the model. The importance of this robustness stems from the fact that all models are approximations to reality. Consequently, if nearby
models (e.g. more realistic models) are not persistent despite the focal model being persistent, then one can draw few (if any!) conclusions about the persistence of the modeled biological system. To state our result about robustness, let $g(x,\xi) = g_1(x,\xi),\ldots, g_k(x,\xi)$ be fitness functions. The model
\begin{equation}
\label{eq:discreteperturb}
X_{t+1}=g(X_t,\xi_{t+1})\circ X_t
\end{equation}
is called a \emph{$\delta$-perturbation of (\ref{eq:discrete})} provided $g$ satisfies conditions \textbf{A3}--\textbf{A4} and
\[
\sup_{x\in \S}\E[\|f(x,\xi)-g(x,\xi)\|] = \sup_{x\in \S}\int \|f(x,\xi)-g(x,\xi)\| m(d\xi) \leq \delta.\]
\begin{proposition}
\label{thm:robpersistence}
Assume the dynamics (\ref{eq:discrete}) satisfies hypothesis $(i)$ of Theorem~\ref{thm:persistence} and there exist constants $0<\alpha\leq \beta<\infty$ such that
$\alpha \leq f_i(x,\xi) \leq \beta$ for all $i,x,\xi$. Then there exists $\delta > 0$ such that every $\delta-$perturbation of (\ref{eq:discrete}) satisfies hypothesis $(i).$
\end{proposition}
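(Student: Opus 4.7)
I would argue by contradiction using weak compactness.  If the conclusion fails, there exist $\delta_n\downarrow 0$, $\delta_n$-perturbations $g_n$ of $f$, and $g_n$-invariant probability measures $\mu_n$ with $\mu_n(\S_0)=1$ and $\max_i\lambda_i^{g_n}(\mu_n)\le 0$ for every $n$.  Since $\S$ is compact, $\PS$ is weakly compact, so after passing to a subsequence $\mu_n\Rightarrow\mu^*$; the portmanteau theorem applied to the closed set $\S_0=\bigcup_i\{x_i=0\}$ gives $\mu^*(\S_0)=1$.  The goal is to show $\mu^*$ is $P^f$-invariant and $\max_i\lambda_i^f(\mu^*)\le 0$, contradicting hypothesis~(i).

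For $P^f$-invariance I would show $\sup_{x\in\S}|P^{g_n}h(x)-P^fh(x)|\to 0$ for every $h\in C(\S)$ and then pass to the limit in $\int h\,d\mu_n=\int P^{g_n}h\,d\mu_n$.  Uniform continuity of $h$ on the compact set $\S$, the bound $\|g_n(x,\xi)\circ x-f(x,\xi)\circ x\|\le(\sup_{y\in\S}\|y\|)\|g_n(x,\xi)-f(x,\xi)\|$, and Markov's inequality applied to $\|g_n-f\|$ together make this routine.  For the invasion rates one splits
\[\lambda_i^{g_n}(\mu_n)-\lambda_i^f(\mu^*)=\int[\lambda_i^{g_n}-\lambda_i^f]\,d\mu_n+\Bigl[\int\lambda_i^f\,d\mu_n-\int\lambda_i^f\,d\mu^*\Bigr];\]
the bracketed term vanishes by weak convergence because $\alpha\le f_i\le\beta$ makes $\lambda_i^f(x)=\int\log f_i(x,\xi)\,m(d\xi)$ bounded and continuous on $\S$ (dominated convergence with dominator $\max(|\log\alpha|,|\log\beta|)$).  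For the first term one partitions $m$ into the good event $G_n=\{|g_{n,i}-f_i|<\alpha/2\}$, on which $g_{n,i}\in[\alpha/2,3\beta/2]$ and $|\log g_{n,i}-\log f_i|\le(2/\alpha)|g_{n,i}-f_i|$ (giving an $O(\delta_n)$ contribution), and its complement $B_n$ with $m(B_n)\le 2\delta_n/\alpha$ by Markov.

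The main obstacle is controlling $\int_{B_n}|\log g_{n,i}|\,m(d\xi)$ uniformly in $n$ and $x$, since $g_{n,i}$ can be close to $0$ there and assumption~A4 for $g_n$ only supplies an $n$-dependent $L^2(m)$ bound.  I would handle this by replacing $g_n$ with its truncation $\bar g_{n,i}=(g_{n,i}\wedge 2\beta)\vee(\alpha/2)$: because $f_i\in[\alpha,\beta]$, projection onto any interval containing $[\alpha,\beta]$ is $1$-Lipschitz toward $f_i$, so $|\bar g_{n,i}-f_i|\le|g_{n,i}-f_i|$ and $\bar g_n$ is again a $\delta_n$-perturbation of $f$, with $\log\bar g_{n,i}$ now uniformly bounded by $\max(|\log(\alpha/2)|,|\log(2\beta)|)$.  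A brief comparison argument shows $g_n$ and $\bar g_n$ agree on the high-$m$-probability event $G_n$, so $\sup_x|P^{g_n}h(x)-P^{\bar g_n}h(x)|\to 0$ on $C(\S)$; this lets me rerun the weak-limit argument with the uniformly bounded $\bar g_n$, for which both pieces of the above splitting are $O(\delta_n)$ and yield $\sup_x|\lambda_i^{\bar g_n}(x)-\lambda_i^f(x)|\to 0$.  Combining with weak convergence of $\mu_n$ to $\mu^*$ produces $\lambda_i^f(\mu^*)\le 0$ for every $i$, the desired contradiction with hypothesis~(i).
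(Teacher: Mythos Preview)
Your overall strategy matches the paper's proof exactly: argue by contradiction, extract a weak limit $\mu^*$ of the $g_n$-invariant measures $\mu_n$ supported on $\S_0$, show $\mu^*$ is $P^f$-invariant via $\sup_x|P^{g_n}h-P^fh|\to 0$ (uniform continuity of $h$ plus Markov's inequality, just as in the paper), and pass $\lambda_i^{g_n}(\mu_n)\le 0$ to the limit. The paper asserts this last convergence as an immediate consequence of $\alpha\le f_i\le\beta$ and $\sup_x\E[\|g_n-f\|]\to 0$, without addressing the point you raise about $\log g_{n,i}$ on the small-probability set $B_n$.

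Your truncation, however, does not close that gap. You correctly observe that $\bar g_n=(g_n\wedge 2\beta)\vee(\alpha/2)$ is again a $\delta_n$-perturbation and that $\sup_x|\lambda_i^{\bar g_n}(x)-\lambda_i^f(x)|\to 0$. But the contradiction hypothesis gives $\lambda_i^{g_n}(\mu_n)\le 0$ for the \emph{original} $g_n$, not $\lambda_i^{\bar g_n}(\mu_n)\le 0$, and the difference
\[
\lambda_i^{\bar g_n}(\mu_n)-\lambda_i^{g_n}(\mu_n)=\int_{\S}\int_{B_n^x}\bigl[\log\bar g_{n,i}-\log g_{n,i}\bigr]\,m(d\xi)\,\mu_n(dx)
\]
is precisely the uncontrolled $\int_{B_n}|\log g_{n,i}|\,dm$ term you set out to avoid. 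Rerunning the weak-limit argument ``with $\bar g_n$'' does not help either: $\mu_n$ is invariant for $g_n$, not for $\bar g_n$, so even if you establish that every bounded $\delta_n$-perturbation such as $\bar g_n$ eventually satisfies hypothesis~(i), this says nothing about whether the original $g_n$ does---the truncation changes the Markov chain on $B_n$ and severs the link to the assumed failure of~(i) for $g_n$.
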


\section{Continuous time models\label{sec:conts}}

For stochastic differential equation models, we assume, for presentational clarity, that $\S=\Delta=\{x\in \R^k_+: \sum_i x_i =1\}$. However, our results hold more generally for a compact region that is forward invariant for the stochastic dynamics.  The population dynamics on $\Delta$ consists of a ``drift'' term that describes the dynamics in the absence of noise and a ``diffusion'' term that describes the effects of environmental stochasticity on the population dynamics. The drift term for  population $i$ is given by $X^i_tF_i(X_t)$ where $F_i$ is its per-capita growth rate. To allow for correlations of environmental fluctuations across populations, we assume the environmental noise is generated by an $m$-dimensional standard Brownian motion $(B^1_t,\dots,B^m_t)$ and per-capita effect of $B^j_t$ on the growth of population $j$ is given by $\Sigma_i^j(X_t)$. Under these assumptions,  we arrive stochastic differential equations of the form
\begin{equation}
\label{eq:sde}
dX^i_t = X^i_t[F_i(X_t)dt + \sum_{j = 1}^m\Sigma_i^j(X_t)dB^j_t], i = 1, \ldots, k.
\end{equation}
To ensure existence and uniqueness of solutions of \eqref{eq:sde}, we assume that $F_i$ and $\Sigma_i^j$ are real valued Lipschitz continuous maps on $\Delta$. To ensure that the population dynamics remain on $\Delta$ (i.e. $\Delta$ is invariant), we assume that for each $x \in \Delta$, the drift vector
$x\circ F(x)
$
and the diffusion terms
\begin{equation}
\label{defS}
S^j(x) = x\circ\Sigma^j(x), j = 1,\ldots, m,
\end{equation}
are elements of  the tangent space $T\Delta = \{u \in \R^k : \: \sum_{j = 1}^k u_j = 0\}$ of $\Delta$.

The stochastic differential equation (\ref{eq:sde}) defines a continuous time Markov process $\{X_t\}_{t\ge 0}$ on $\Delta.$ We let $\{P_t\}_{t\ge 0}$ denote the associated semigroup defined by
\[P_t h (x) = \E[h(X_t)|X_0 = x]\] for every bounded or nonnegative  measurable function $h: \Delta \to \R.$ $P_th(x)$ is the expected value of $h$ at time $t$ given that initial population state is $x$. 
A probability $\mu$ on $\Delta$ is called {\em invariant} (respectively {\em ergodic}) provided it is invariant (respectively ergodic) for $P_t$ for all $t > 0.$ The \emph{occupation measure} of $\{X_t\}_{t\ge 0}$ is the measure
\[\Pi_t = \frac{1}{t} \int_0^t \delta_{X_s} ds.\]
$\Pi_t(A)$ corresponds to the fraction of time spent in the set $A\subset \Delta$ by time $t$.

The analog of the per-capita growth rate for these continuous time processes is given by
\begin{equation}
\label{eq:newlambda}
\lambda_i(x) = F_i(x) - \frac{1}{2} a_{ii}(x)
\end{equation}
where
\[a_{ij}(x) = \sum_{k = 1}^m \Sigma_i^k(x) \Sigma_j^k(x).\]
When $\lambda_i(x)>0$, the population tends to increase. When $\lambda_i(x)<0$, the population tends to decrease. Like in the discrete-time case, define
\begin{equation}
\label{eq:newlambdamu}
\lambda_i(\mu)  = \int \lambda_i(x) \mu(dx)
\end{equation}
and
\begin{equation}
\label{eq:newlambda*}
\lambda_*(\mu) = \max_i \lambda_i(\mu).
\end{equation}

In Appendix B, we prove the following continuous-time analog of Theorem~\ref{thm:persistence}
\begin{theorem}
\label{thm:persistence2}
Assume that one of the equivalent conditions $(i)$ and $(ii)$ of Theorem \ref{thm:persistence} hold where $\lambda_i(\mu)$ and $\lambda_*(\mu)$ are  given by formulaes (\ref{eq:newlambda}), (\ref{eq:newlambdamu}) and (\ref{eq:newlambda*}). Then the conclusion of Theorem \ref{thm:persistence} hold for the occupation measure of the process $\{X_t\}_{t\ge 0}$ solution to (\ref{eq:sde}). \end{theorem}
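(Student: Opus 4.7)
The plan is to port the discrete-time Lyapunov argument into continuous time by replacing the discrete increments with It\^o's formula applied to $\log X^i_t$. Using the equivalence $(i)\Leftrightarrow(ii)$ of Theorem~\ref{thm:persistence}, fix weights $p=(p_1,\dots,p_k)\in \Delta$ such that $\sum_i p_i \lambda_i(\mu)>0$ for every ergodic invariant measure $\mu$ supported on $\S_0$, and set $V(x)=\sum_i p_i \log x_i$. Since $\Delta\subset[0,1]^k$ we have $V\le 0$ on $\iS$. Applying It\^o to each $\log X^i_t$, the It\^o correction contributes exactly $-\tfrac12 a_{ii}(X_t)$, so the drift in (\ref{eq:sde}) collapses to the quantity defined in (\ref{eq:newlambda}) and, with $u(x):=\sum_i p_i\lambda_i(x)$,
\begin{equation*}
V(X_t)=V(X_0)+\int_0^t u(X_s)\,ds + M_t,
\end{equation*}
where $M_t=\sum_{i,j}p_i\int_0^t \Sigma_i^j(X_s)\,dB^j_s$ is a continuous local martingale whose quadratic variation rate is bounded on the compact set $\Delta$ (because $\Sigma$ is Lipschitz, hence bounded).

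First, I would invoke the Feller property of $\{P_t\}$ together with compactness of $\Delta$ to conclude that $\{\Pi_t\}$ is almost surely tight and that every weak-limit point is almost surely invariant for the semigroup. Second, I would use the strong law for continuous martingales of bounded variation rate ($M_t/t\to 0$ a.s.) and the upper bound $V(X_t)\le 0$, after dividing the displayed identity by $t$, to obtain
\begin{equation*}
\limsup_{t\to\infty}\int u(x)\,\Pi_t(dx)=\limsup_{t\to\infty}\frac{V(X_t)-V(X_0)-M_t}{t}\le 0\quad\text{a.s.}
\end{equation*}
Third, for any weak limit $\Pi_\infty$, the continuous-time analogue of Proposition~\ref{prop:invasion}(iii)---namely $\lambda_i(\nu)=0$ for any invariant $\nu$ with $\nu(\iS)=1$---combined with the ergodic decomposition theorem and condition $(ii)$ gives $\int u\,d\nu\ge 0$ for every invariant $\nu$, with strict inequality as soon as $\nu(\S_0)>0$. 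Together with the upper bound above this forces $\Pi_\infty(\S_0)=0$ almost surely. A routine contradiction/compactness argument then upgrades this qualitative statement to the quantitative one: if the conclusion failed for some $\epsilon_0>0$, one could extract along a subsequence $\eta_n\downarrow 0$ a weak limit with mass at least $\epsilon_0$ on $\S_0$, contradicting what was just established.

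The main obstacle is that $V$ is unbounded below as $x$ approaches $\S_0$, so a priori the It\^o identity is valid only up to a stopping time and $V(X_t)/t$ could fail to vanish along trajectories clustering near the boundary. I would resolve this by localization: for each $n$ introduce $\tau_n=\inf\{t:V(X_t)\le -n\}$, apply It\^o on $[0,t\wedge\tau_n]$, and use the strict positivity of $u$ against boundary-supported invariant measures (via the same limit-of-occupation-measures argument) to show that $\tau_n\uparrow\infty$ almost surely, thereby precluding any "escape to the boundary" scenario. A secondary technical point is that weak limits of $\Pi_t$ are merely invariant while condition $(ii)$ is phrased for ergodic measures; this gap is bridged by the ergodic decomposition together with the continuity of $\mu\mapsto\int u\,d\mu$, which allows the strict inequality on ergodic boundary measures to be transferred to any invariant measure charging $\S_0$.
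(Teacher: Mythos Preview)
Your approach is correct and is essentially the paper's own: apply It\^o to $\log X^i_t$ (the paper does this coordinate by coordinate rather than directly to the weighted sum $V$, a cosmetic difference), invoke the strong law of large numbers for martingales with bounded quadratic-variation rate, and then repeat verbatim the discrete-time argument that weak limits of $\Pi_t$ are almost surely invariant and cannot charge $\S_0$. Your localization concern is overcomplicated, however: because the SDE has the multiplicative form $dX^i_t=X^i_t(\cdots)$, any solution started in $\iS$ satisfies $X^i_t>0$ for all $t$ almost surely (each $X^i_t$ is a stochastic exponential), so $\tau_n\uparrow\infty$ automatically with no appeal to the persistence hypothesis; and in any case the argument only requires $\limsup_{t\to\infty}V(X_t)/t\le 0$, which is immediate from $V\le 0$, not $V(X_t)/t\to 0$.
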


To ensure the existence of a unique stationary distribution and convergence toward this distribution, we need an appropriate irreducibility condition that ensures the noise can locally push the dynamics in all directions. More precisely, we call the system (\ref{eq:sde}) {\em nondegenerate} if the column vectors $S^1(x), \ldots , S^m(x)$ span $T\Delta$ for all $x \in \iDelta$.
\begin{theorem}\label{thm:onemore} Assume that (\ref{eq:sde}) is {\em non-degenerate} and the assumption of Theorem \ref{thm:persistence2} holds. Then there exists a unique invariant probability $\pi$ such that $\pi ( \Delta_0)=0.$ Furthermore,
\begin{enumerate}
\item[(i)] The distribution of $X_t$ converges to $\pi$ as $t\to\infty$ whenever $X_0=x\in \iDelta;$ that is
\[\\lim_{t \rightarrow \infty}  \|\P_x[X_t \in \cdot] - \pi\| = 0 \mbox{ for all } x \in \iDelta  ,\]
\item[(ii)] The occupation measures $\Pi_t = \frac{1}{t}\int_0^t  \delta_{X_s}ds$ converge almost surely to $\pi$, whenever $X_0=x\in \iDelta$.
\end{enumerate}
\end{theorem}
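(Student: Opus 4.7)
The plan is to combine the persistence estimate of Theorem~\ref{thm:persistence2} with the regularity provided by non-degeneracy of the diffusion on $\iDelta$. I would proceed in three steps: existence of an invariant probability $\pi$ charging the interior, uniqueness via strong Feller plus irreducibility, and the two convergence statements.

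\emph{Existence of $\pi$.} Fix $x\in\iDelta$ and consider the occupation measures $\Pi_t$. The family is almost surely tight since $\Delta$ is compact, and the Feller property of $\{P_t\}$ (Lipschitz coefficients of \eqref{eq:sde}) makes every almost sure weak limit point an invariant probability by the usual Krylov--Bogolyubov argument. By Theorem~\ref{thm:persistence2}, for every $\epsilon>0$ there is $\eta>0$ with $\limsup_t \Pi_t(\Delta_\eta)\le\epsilon$ almost surely, so the portmanteau theorem applied to the closed set $\Delta_\eta$ forces $\pi(\Delta_0)\le\pi(\Delta_\eta)\le\epsilon$. Letting $\epsilon\downarrow 0$ yields $\pi(\Delta_0)=0$.

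\emph{Uniqueness.} Non-degeneracy says the vectors $S^1(x),\dots,S^m(x)$ span $T\Delta$ at every $x\in\iDelta$, so $(a_{ij}(x))$ is uniformly elliptic on every compact subset of $\iDelta$ as a quadratic form on $T\Delta$. H\"ormander's hypoellipticity theorem then produces, for each $t>0$ and $x\in\iDelta$, a smooth density for $P_t(x,\cdot)$ with respect to surface measure on $\iDelta$. A Stroock--Varadhan support and controllability argument (any two interior points can be joined by a smooth curve tangent to $T\Delta$) upgrades this to strict positivity of the density throughout $\iDelta$. Consequently $\{P_t\}$ is strong Feller and topologically irreducible on $\iDelta$, and Doob's theorem yields at most one invariant probability on $\iDelta$; combined with the previous step, $\pi$ is unique.

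\emph{Convergence, and the main obstacle.} Part (i) follows from strong Feller plus irreducibility plus uniqueness: the positive density gives a one-step minorization of $P_t$ on every compact subset of $\iDelta$, and the Khas'minskii/Meyn--Tweedie framework then yields $\|\P_x[X_t\in\cdot]-\pi\|\to 0$ for every $x\in\iDelta$. Part (ii) follows because each almost sure weak limit point of $\Pi_t$ is an invariant probability with no mass on $\Delta_0$, hence equals $\pi$ by uniqueness, and tightness promotes subsequential convergence to full convergence. The delicate step is global (not merely local) positivity of the transition density on $\iDelta$: the coefficients degenerate near $\Delta_0$, so the support-theorem argument must be combined with the persistence bound of Theorem~\ref{thm:persistence2} to rule out mass leakage to the boundary while the ellipticity-based arguments are carried out on compact subsets of $\iDelta$. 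This interplay between the boundary persistence estimate and the interior hypoellipticity is what I expect to require the most care.
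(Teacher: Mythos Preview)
Your argument is essentially correct, but the route differs from the paper's. The paper does not work directly with continuous-time ergodic theory: once non-degeneracy is invoked to produce a continuous \emph{positive} transition density $p_t(x,y)$ on $\iDelta$ (one citation to Durrett, no H\"ormander or support-theorem machinery), it observes that for each fixed $t>0$ the skeleton chain $\{X_{nt}\}_{n\ge 0}$ satisfies the strong-irreducibility hypothesis of the discrete-time Theorems~\ref{thm:uniqueness} and~\ref{thm:PHC} (via Remark~1). This immediately gives a unique invariant probability $\pi_t$ on $\iDelta$ and total-variation convergence along the skeleton. A short dyadic argument shows $\pi_t$ is independent of $t$, and interpolation $P_{n+s}=P_nP_s$ upgrades skeleton convergence to continuous-time convergence. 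Part~(ii) then follows from uniqueness together with Theorem~\ref{thm:persistence2}, exactly as in your last step.

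Your approach via strong Feller plus irreducibility and Doob's theorem is a legitimate alternative and is more self-contained, but it duplicates work already packaged in the discrete theorems. You also overestimate the ``main obstacle'': the degeneracy of the coefficients occurs only on $\Delta_0$, and since the process started in $\iDelta$ stays in $\iDelta$, ellipticity at every interior point already yields a smooth, strictly positive density throughout $\iDelta$ by standard parabolic theory; no separate interplay with the persistence bound is needed for this step. The persistence estimate enters only where you use it in your existence paragraph and in part~(ii).
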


\subsection{Robust persistence} Let $\tilde{F}$ and $\tilde{\Sigma}$  be real valued Lipschitz continuous maps on $\Delta$ with the property that for each $x \in \Delta$, the drift vector $x\circ\tilde{F}(x)$
and the diffusion terms
$\tilde{S}^j(x) =x\circ\tilde{\Sigma}^j(x),$ are elements of $T\Delta$.  The stochastic differential equation
 \begin{equation}
\label{eq:sdeperturb}
dX^i_t = X^i_t[\tilde{F}_i(X_t)dt + \sum_{j = 1}^m\tilde{\Sigma}_i^j(X_t)dB^j_t], i = 1, \ldots, k;
\end{equation}
is called a {\em $\delta$-perturbation} of (\ref{eq:sde}) if \[\sup_{x\in \Delta} \|F(x)-\tilde{F}(x)\| + \|\Sigma(x) - \tilde{\Sigma}(x)\| \leq \delta.\]
\begin{proposition}
\label{thm:robpersistence2} Assume  that the dynamics (\ref{eq:sde}) satisfies hypothesis  $(i)$  of Theorem \ref{thm:persistence}. Then there exists $\delta > 0$ such that every {\em $\delta$-perturbation} of (\ref{eq:sde}) satisfies hypothesis $(i)$.
\end{proposition}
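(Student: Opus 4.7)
The plan is to argue by contradiction, combining compactness of the space of invariant measures with stability of the SDE under small perturbations of its coefficients. Suppose the conclusion fails. Then there exist $\delta_n \downarrow 0$, $\delta_n$-perturbations $(\tilde F^{(n)}, \tilde \Sigma^{(n)})$ of (\ref{eq:sde}), and invariant probability measures $\tilde\mu_n$ of the $n$th perturbed process with $\tilde\mu_n(\Delta_0) = 1$ yet $\tilde\lambda_*^{(n)}(\tilde\mu_n) \leq 0$, where $\tilde\lambda_i^{(n)}(x) = \tilde F_i^{(n)}(x) - \frac{1}{2} \tilde a_{ii}^{(n)}(x)$. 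Since $\Delta$ is compact, pass to a subsequence so that $\tilde\mu_n \Rightarrow \mu$ weakly for some Borel probability $\mu$ on $\Delta$; since $\Delta_0$ is closed, $\mu(\Delta_0) = 1$.

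The bound $\sup_{x\in\Delta} \|F(x) - \tilde F^{(n)}(x)\| + \|\Sigma(x) - \tilde\Sigma^{(n)}(x)\| \leq \delta_n$, together with boundedness of $\Sigma$ on the compact $\Delta$, gives $\|\tilde F_i^{(n)} - F_i\|_\infty \leq \delta_n$ and $\|\tilde a_{ii}^{(n)} - a_{ii}\|_\infty = O(\delta_n)$; hence $\tilde\lambda_i^{(n)} \to \lambda_i$ uniformly on $\Delta$. Since each $\lambda_i$ is continuous on $\Delta$, weak convergence $\tilde\mu_n \Rightarrow \mu$ then yields $\tilde\lambda_i^{(n)}(\tilde\mu_n) \to \lambda_i(\mu)$ for every $i$, and therefore $\lambda_*(\mu) \leq 0$.

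The crux is to show that $\mu$ is invariant for the unperturbed semigroup $\{P_t\}_{t \geq 0}$, since then hypothesis $(i)$ forces $\lambda_*(\mu) > 0$, yielding the desired contradiction. Fix $t > 0$ and $h \in C(\Delta)$. Invariance of $\tilde\mu_n$ gives $\int h\, d\tilde\mu_n = \int \tilde P_t^{(n)} h\, d\tilde\mu_n$. Couple the unperturbed and $n$th perturbed SDEs by the same Brownian motion from a common initial condition $x \in \Delta$; a Gr\"onwall estimate applied to $|X_t^x - \tilde X_t^{x,(n)}|^2$, using only the Lipschitz constants of the unperturbed $F$ and $\Sigma$ (which are fixed) together with the uniform controls $\|F - \tilde F^{(n)}\|_\infty, \|\Sigma - \tilde\Sigma^{(n)}\|_\infty \leq \delta_n$, yields
\[
\sup_{x \in \Delta} \E\,|X_t^x - \tilde X_t^{x,(n)}|^2 \leq C_t\, \delta_n^2.
\]
Because $h$ is uniformly continuous on compact $\Delta$, this implies $\|\tilde P_t^{(n)} h - P_t h\|_\infty \to 0$; combined with $\tilde\mu_n \Rightarrow \mu$ and the continuity of $P_t h$ (which follows from continuous dependence of SDE solutions on their initial condition), both sides of the invariance identity pass to the limit to give $\int h\, d\mu = \int P_t h\, d\mu$. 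Since $t$ and $h$ are arbitrary, $\mu$ is invariant for $\{P_t\}$, completing the contradiction.

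The main technical obstacle is the uniform comparison $\|\tilde P_t^{(n)} h - P_t h\|_\infty \to 0$. The delicate point is that the perturbations are not assumed to have uniformly bounded Lipschitz constants, so the coupling must be arranged so that the Gr\"onwall constant depends only on the unperturbed coefficients. This is achieved by the triangle-type bound $|b(X_s) - \tilde b^{(n)}(\tilde X_s^{(n)})| \leq L\,|X_s - \tilde X_s^{(n)}| + \|b - \tilde b^{(n)}\|_\infty$, where $L$ is the Lipschitz constant of the unperturbed drift $b(x) = x\circ F(x)$, and an analogous decomposition for the diffusion coefficient handled via It\^o's isometry.
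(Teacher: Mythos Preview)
Your proposal is correct and follows essentially the same route as the paper: a contradiction argument via a weak$^*$ limit $\mu$ of invariant measures for the perturbed systems, uniform convergence $\tilde\lambda_i^{(n)}\to\lambda_i$, and a coupling/Gr\"onwall bound $\sup_x \E|X_t^x-\tilde X_t^{x,(n)}|^2\le C_t\delta_n^2$ (obtained by the very decomposition $b(X_s)-\tilde b^{(n)}(\tilde X_s^{(n)}) = [b(X_s)-b(\tilde X_s^{(n)})] + [b(\tilde X_s^{(n)})-\tilde b^{(n)}(\tilde X_s^{(n)})]$ you describe) to pass the invariance identity to the limit. Your explicit flagging of why the Gr\"onwall constant depends only on the unperturbed Lipschitz data is exactly the point behind the paper's four-term splitting of $X_t-\tilde X_t$.
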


In the case $\Sigma(x) = 0,$ Proposition \ref{thm:robpersistence2} combined with Theorem  \ref{thm:persistence2} or \ref{thm:onemore} implies that  every sufficiently small random perturbation
of the deterministic system
\begin{equation}
\label{eq:ode}
\frac{dx}{dt} = x \circ F(x)
\end{equation}
is persistent, provided that
\[ \sup_{\mu}  \int F_i(x) \mu(dx) > 0\] where the supremum is taken over the invariant probabilities of (\ref{eq:ode}) supported by $\Delta_0.$
This fact was also proved in  \citet[Theorem 3.1]{benaim-etal-08}  using other techniques. Hence, Theorems \ref{thm:persistence2}, \ref{thm:onemore} and Proposition \ref{thm:robpersistence2} extend \citet[Theorem 3.1]{benaim-etal-08} beyond  small perturbation of deterministic dynamics. We remark, however, the result obtained in \citet{benaim-etal-08} provides an exponential rate of convergence toward $\pi.$ It would be nice to see whether of not such a rate can be obtained under the more general assumptions of Theorem \ref{thm:onemore}.

\section{Applications\label{sec:apps}}

\subsection{Lottery models and the storage effect} The lottery model of \citet{chesson-warner-81} represents species that require a territory or ``home'' (an area held to the exclusion of others) in order to reproduce. Moreover, the model assumes that space is always in short supply and, consequently, all patches are occupied.  Let $X^i_t$ denote the fraction of space occupied by species $i$ at time $t$. The fraction of adults of species $i$ dying in a time step is $m_i$. The spaces emptied by dying individuals are immediately filled by progeny which are produced at a rate $b_i(X_t,\xi_{t+1})X^i_t$ by species $i$. Here $\xi_t$ is a sequence of i.i.d. random variables that represents environmental stochasticity. If all progeny are equally likely to fill empty spaces, then the probability an empty space is filled by species $i$ equals
\[
\frac{b_i(X_t,\xi_{t+1})X^i_t}{\sum_{j=1}^k b_j(X_t,\xi_{t+1})X^j_t}.
\]
Under these assumptions, the dynamics of the competing species are given by
\begin{equation}\label{eq:lottery}
X^i_{t+1}=(1-m_i)X^i_t+ \sum_j m_j X^j_t \frac{b_i(X_t,\xi_{t+1})X^i_t}{\sum_j b_j(X_t,\xi_{t+1})X^j_t} \quad i=1,\dots, k
\end{equation}
on the state space $\Delta$. For many choices of $b_i$ and $\xi_t$, \eqref{eq:lottery} satisfies the irreducibility assumptions of Theorems~\ref{thm:uniqueness} and \ref{thm:PHC}. For instance, the irreducibility assumptions are satisfied if $b_i(X_t,\xi_{t+1})=\xi^i_{t+1}$ are log-normally distributed or gamma distributed.

When there are two species, \citet{chesson-82} analyzed this model when $b_i(X_t,\xi_t)$ do not depend on $X_t$. We show how our results recover Chesson's persistence criteria. The set of ergodic invariant measures on $\Delta_0$ are the Dirac measures, $\delta_{(0,1)}$ and $\delta_{(1,0)}$, supported on the points    $(0,1)$ and $(1,0)$, respectively. At these ergodic measures, the invasion rates are given by
\begin{center}
\begin{tabular}{c|cc}
$\mu$ & $\lambda_1(\mu)$ & $\lambda_2(\mu)$ \\\hline
$\delta_{(1,0)}$& 0& $\E\left[\log \left( 1-m_2+m_2 \frac{b_2((1,0),\xi)}{b_1((1,0),\xi)}\right)\right]$\\
$\delta_{(0,1)}$& $\E\left[\log \left( 1-m_1+m_1 \frac{b_1((0,1),\xi)}{b_2((0,1),\xi)}\right)\right]$&0
\end{tabular}
\end{center}
Theorem~\ref{thm:persistence} ensures persistence if
\begin{equation}\label{eq:chesson}
\E\left[\log \left( 1-m_1+m_1 \frac{b_1((0,1),\xi)}{b_2((0,1),\xi)}\right)\right]>0
\mbox{ and }
\E\left[\log \left( 1-m_2+m_2 \frac{b_2((1,0),\xi)}{b_1((1,0),\xi)}\right)\right]>0
\end{equation}
Hence, we have recovered the ``mutual invasibility'' condition for persistence of Chesson without making any monotonicity assumptions about the functions $b_i(x,\xi)$ (see also \citet{ellner-84}).

When the per-capita reproductive rates are frequency-independent i.e. $b_i(x,\xi_{t+1})=\xi_{t+1}^i$ and $\xi_t=(\xi_t^1,\xi_t^2)$, the persistence condition \eqref{eq:chesson} can be used to illustrate what \citet{chesson-warner-81} call the ``storage-effect.'' In the absence of environmental stochasticity, \citet{chesson-82} has shown that coexistence is not possible. If there is environmental stochasticity and all individuals die between generations i.e. $m_i=1$ for all $i$, then \eqref{eq:chesson} simplifies to
\[
\E[\log \xi^1]> \E[\log \xi^2] \mbox{ and }\E[\log \xi^2]> \E[\log \xi^1]
\]
Both of these conditions can not be meet in which case \citet[Thm. 3.5]{chesson-82} has shown that one of the species goes extinct with probability one. Hence, when individuals are short lived, coexistence does not occur. On the other hand, if individuals are long-lived i.e. $m_i\approx 0$ for all $i$, then the approximation $\log(1+x)=x+\mathcal{O}(x^2)$ applied to \eqref{eq:chesson} yields the persistence criterion
\[
\E\left[ \frac{\xi^1}{\xi^2}\right]>1 \mbox{ and } \E\left[ \frac{\xi^2}{\xi^1}\right]>1.
\]
This condition is meet when the species exhibit some temporal partitioning: the reproductive rates $\xi_t^1$ and $\xi_t^2$ are not overly correlated. Intuitively, long-lived individuals, unlike short-lived individuals, can ``store'' their numbers over periods of poor conditions and, thereby, take advantage of future good conditions. This ability to store for the future in conjunction with temporal partitioning mediates coexistence.

For lottery models with three or more species, persistence criteria can be more subtle as they may require determining the invasion rates at non-trivial ergodic measures. However, an interesting exception occurs with a rock-paper-scissor version of the lottery model i.e. species $B$ displaces species $A$, $C$ displaces $B$, and $A$ displaces $C$. To model this intransitive interaction, we assume that the per-capita reproductive rates are linear functions of the species frequencies
\[
b_i(X_t,\xi_{t+1})= \sum_j \xi^{ij}_{t+1} X^j_t
\]
where
\[
\xi_t =  \begin{pmatrix} \beta_t & \alpha_t & \gamma_t  \cr \gamma_t & \beta_t & \alpha_t \cr \alpha_t & \gamma_t & \beta_t
\end{pmatrix}
\]
where $\alpha_t>\beta_t>\gamma_t>0$ for all $t$. For simplicity, we assume that $m^i=m$ for all $i$.

For any pair of strategies, say $1$ and $2$, the dominant strategy, $1$ is this case,  displaces the subordinate strategy. Indeed, assume $X^3_0=0$. If $y_t=X^2_t/X^1_t$ and $z_t= \sum_i \xi^i_{t+1}X^i_t$, then
\[
y_{t+1}=\frac{(1-m)z_t+m(\gamma_{t+1} X^1_t+\beta_{t+1} X^2_t)}{(1-m)z_t+m(\beta_{t+1} X^1_t+\alpha_{t+1} X^2_t)}\,y_t<y_t
\]
is a decreasing sequence that converges to $0$. Hence, the only ergodic measures on $\Delta_0$ are Dirac measures $\delta_{x}$ supported on $x=(1,0,0),(0,1,0),(0,0,1)$.  At these ergodic measures, the invasion rates are given by
\begin{center}
\begin{tabular}{c|ccc}
$\mu$ & $\lambda_1(\mu)$ & $\lambda_2(\mu)$ &$\lambda_3(\mu)$ \\\hline
$\delta_{(1,0,0)}$& 0& $\E\left[\log \left( 1-m+m\,\alpha_t/\beta_t\right)\right]$& $\E\left[\log \left( 1-m+m\,\gamma_t/\beta_t\right)\right]$\\
$\delta_{(0,1,0)}$&  $\E\left[\log \left( 1-m+m\,\gamma_t/\beta_t\right)\right]$& 0& $\E\left[\log \left( 1-m+m\,\alpha_t/\beta_t\right)\right]$\\
$\delta_{(0,0,1)}$&$\E\left[\log \left( 1-m+m\,\alpha_t/\beta_t\right)\right]$ &  $\E\left[\log \left( 1-m+m\,\gamma_t/\beta_t\right)\right]$&0
\end{tabular}
\end{center}
A straightforward algebraic competition reveals that the conditions for persistence are satisfied if and only if
\begin{equation}\label{eq:rps}
I(m):=\E\left[\log \left( 1-m+m\,\alpha_t/\beta_t\right)\right]+\E\left[\log \left( 1-m+m\,\gamma_t/\beta_t\right)\right]>0
\end{equation}
We conjecture that if the opposite inequality holds, then persistence does not occur.

To see the role of the storage effect for these rock-paper-scissor communities, we can examine how the sign of $I(m)$ depends on $m$. Since $I(0)=0$ and $I''(m)<0$ for $0\le m\le 1$, $I(m)>0$ for a non-empty interval of $m$ values if and only if
\begin{equation}\label{eq:nec}
I'(0)= \E\left[\frac{\alpha_t}{\beta_t}+\frac{\gamma_t}{\beta_t}\right]-2>0
\end{equation}
Moreover, if
\begin{equation}\label{eq:suf}
I(1)=\E\left[\log \frac{\alpha_t}{\beta_t}+\log \frac{\gamma_t}{\beta_t}\right]>0
\end{equation}
 then the community persists for all $0<m\le 1$. However, if \eqref{eq:nec} holds but \eqref{eq:suf} does not, then the community persists for $0<m<m^*$ for some $m^*<1$. Hence, as in the two species example of Chesson and Warner, competitive communities with intransitives are more likely to coexist if individuals have longer generation times. However, unlike the example of Chesson and Warner,  environmental noise can disrupt as well as enhance coexistence (see discussion in section~\ref{sec:discussion}).

\subsection{Discrete Lotka-Volterra dynamics with disturbances} Consider $k$ interacting species whose dynamics in the absence of environmental disturbances is given by
\[
X_{t+1}=X_{t}\circ \exp(A X_t +b)
\]
where the matrix $A$ describes pairwise interactions between species and $b$ describes the intrinsic rates of growth of each species. These dynamics of this system were studied by \citet{hofbauer-etal-87}. To account for stochastic disturbances of these dynamics, we assume that the fraction of individuals of species $i$ surviving environmental disturbances is $\xi_t^i \in (0,1]$ at time $t$. Then the dynamics become
\begin{equation}\label{LV}
X_{t+1}=X_{t}\circ \exp(A X_t +b) \circ \xi_{t+1}
\end{equation}

An algebraic characterization of boundedness in terms of the matrices $A$ and $b$ remains an open problem even without the stochasticity.  \citet{hofbauer-etal-87} defined the interaction matrix $A$ to be \emph{hierarchically ordered} if there exists a reordering of the indices such that $A_{ii}<0$ for all $i$ and $A_{ij}\le 0$ whenever $i\le j$. While this assumption excludes all types of mutualistic interactions, it allows for any type of predator-prey or competitive interaction. The following lemma extends work of \citet{hofbauer-etal-87} by showing that hierarchically ordered systems remain bounded in the presence of environmental disturbances. For these systems, the irreducibility conditions of Theorems~\ref{thm:uniqueness} and \ref{thm:PHC} are meet whenever $\xi_t$ has a positive, continuous density on the interval $(0,1)$. 

\begin{lemma} If \eqref{LV} is hierarchically ordered, then there exists $K>0$ such that $X_t\in [0,K]^k$ for $t\ge k+1$. \end{lemma}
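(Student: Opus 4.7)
The plan is to prove by induction on $i \in \{1,\ldots,k\}$ that there exist constants $K_i > 0$ such that $X^i_t \le K_i$ for all $t \ge i$. Taking $K = \max_i K_i$ then yields $X_t \in [0,K]^k$ for $t \ge k$, which implies the stated conclusion with one time step of slack. The driving observation is that $\xi^i_{t+1} \le 1$ combined with the hierarchical assumption $A_{ij} \le 0$ for $j \ge i$ gives, after discarding all nonpositive terms beyond the diagonal, the coordinate-wise scalar inequality
\[
\log X^i_{t+1} \;\le\; \log X^i_t + b_i + \sum_{j < i} A_{ij} X^j_t + A_{ii} X^i_t.
\]

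For the base case $i=1$, the sum over $j < i$ is empty, so $X^1_{t+1} \le h(X^1_t)$ with $h(y) = y\exp(b_1 + A_{11} y)$. Since $A_{11} < 0$, the function $\log h(y) = \log y + b_1 + A_{11} y$ attains a global maximum at $y^{\ast} = -1/A_{11}$ with value $b_1 - 1 - \log|A_{11}|$, so $h(y) \le e^{b_1 - 1}/|A_{11}|$ uniformly in $y > 0$. Hence $X^1_t \le K_1 := e^{b_1 - 1}/|A_{11}|$ for every $t \ge 1$, regardless of $X^1_0$. This one-step scalar bound is the only analytical lemma needed, and it will be reused at every level of the induction.

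For the inductive step, suppose $X^j_t \le K_j$ for $t \ge j$ and all $j < i$. Once $t \ge i - 1$, all such bounds hold simultaneously, so writing $A_{ij}^+ = \max(A_{ij},0)$ and $c_i = \sum_{j < i} A_{ij}^+ K_j$, the recursion collapses to
\[
\log X^i_{t+1} \;\le\; \log X^i_t + (b_i + c_i) + A_{ii} X^i_t \qquad (t \ge i - 1).
\]
Applying the scalar lemma with $b_1$ replaced by $b_i + c_i$ and $A_{11}$ by $A_{ii}$ yields $X^i_t \le K_i := e^{b_i + c_i - 1}/|A_{ii}|$ for every $t \ge i$, closing the induction.

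The main obstacle is not depth but bookkeeping: one must be careful that the inductive bounds $X^j_t \le K_j$ for $j < i$ are all in force \emph{before} beginning to iterate the one-dimensional recursion for $X^i$, which is what forces the delay $t \ge i$ (and the margin $t \ge k+1$ in the statement). Note that the lower bound on $\xi^i_t$ plays no role — only $\xi^i_t \le 1$ is used — so the extra assumption that $\xi_t$ admit a positive density on $(0,1)$, which is needed later for the irreducibility hypotheses of Theorems~\ref{thm:uniqueness} and \ref{thm:PHC}, is not invoked here.
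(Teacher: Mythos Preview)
Your proof is correct and follows essentially the same inductive scheme as the paper's: bound $X^1$ first via the one-dimensional Ricker inequality $y\exp(b+Ay)\le e^{b-1}/|A|$, then feed the accumulated bounds into the $i$-th coordinate. The only cosmetic differences are that you track a slightly sharper time index ($t\ge i$ rather than $t\ge i+1$) and use $A_{ij}^{+}$ in place of $|A_{ij}|$ in the constant $c_i$, both of which are harmless refinements.
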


\begin{proof} Following \citet{hofbauer-etal-87} observe that
\[
X_{t+1}^1 \le X_t^1 \exp(A_{11} X_t^1+b_1)
\]
for all $t$ as $A_{1j}\le 0$ for all $j\ge 2$ and $\xi_t^1\le 1$. Hence, $X_{t+1}^1$ is bounded above by $K_1=-\exp(b_1-1)/A_{11}$, and  $X_t^1\in [0,K_1]$ for  $t\ge 2$.

Assume that there exist $K_i$ such that $X_t^i \in [0,K_i]$ for $i\le j-1$ and $t\ge i+1$. We will show that there exists $K_{j}$ such that $X_t^{j}\in [0,K_{j}]$ for $t\ge j+1$. Indeed, by the hierarchically ordered assumption and our inductive assumption,
\[
X_{t+1}^{j}\le X_t^{j}\exp(A_{jj}X^{j}_t+b_{j}+\sum_{i<j} |A_{ji}|K_j)
\]
for $t\ge j$. Hence, $X_t^j \le K_j$ for $t\ge j+1$ where
 $K_j=-\exp(b_{j}+\sum_{i<j} |A_{ji}|K_j-1)/A_{jj}$. Defining $K=\max K_j$ completes the proof.
\end{proof}

The following lemma shows that verifying persistence can reduce to a linear algebra problem. In particular, this lemma implies that all the permanence criteria developed by \citet{hofbauer-etal-87} for hierarchal systems extends to these stochastically perturbed Lotka-Volterra systems.

\begin{lemma} Let $\mu$ be an ergodic measure for \eqref{LV} and $I\subset\{1,\dots,k\}$ be such that $\mu(\{x\in \S: x^i>0$ iff $i\in I\}=1$. Define $\beta_i = b_i+\E[\log \xi^i_t]$.  If there exists a unique solution $\hat x$ to
\[
\sum_j  A_{ij}\hat x_j+\beta_i=0 \mbox{ for }i\in I\mbox{ and }
\hat x_i=0 \mbox{ for } i\notin I
\]
 then
\[
\lambda_i(\mu)=\left\{\begin{array}{cc} 0& \mbox{ if }i\in I\\ \sum_j A_{ij} \hat x_j +\beta_i & \mbox{otherwise.}
\end{array}\right.
\]
\end{lemma}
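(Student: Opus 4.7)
The plan is to exploit the log-linear structure of the Lotka--Volterra fitness. Reading off from \eqref{LV}, $f_i(x,\xi) = \exp(\sum_j A_{ij} x_j + b_i)\,\xi^i$, so
\[
\log f_i(x,\xi) = \sum_j A_{ij} x_j + b_i + \log \xi^i,
\]
and integrating against $m(d\xi)$ gives the \emph{affine} expected per-capita growth rate
\[
\lambda_i(x) = \sum_j A_{ij} x_j + \beta_i.
\]
This is the key simplification: $\lambda_i$ is linear in $x$, so $\lambda_i(\mu)$ only depends on the first moments of $\mu$.

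Next I would introduce the mean vector $\bar x_j := \int x_j\,\mu(dx)$ and use the linearity to write $\lambda_i(\mu) = \sum_j A_{ij}\bar x_j + \beta_i$ for every $i$. Two constraints now pin down $\bar x$. First, because $\mu$ is supported on $\{x : x^i > 0 \text{ iff } i \in I\}$, we immediately get $\bar x_j = 0$ for $j \notin I$. Second, for $i \in I$, the set $\{x : x_i > 0\}$ has full $\mu$-measure, so Proposition~\ref{prop:invasion}(iii) yields $\lambda_i(\mu) = 0$, i.e.\ $\sum_j A_{ij}\bar x_j + \beta_i = 0$. Thus $\bar x$ satisfies exactly the linear system defining $\hat x$, and by the assumed uniqueness of its solution we conclude $\bar x = \hat x$. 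Substituting back gives $\lambda_i(\mu) = \sum_j A_{ij}\hat x_j + \beta_i$ for $i \notin I$ and $\lambda_i(\mu)=0$ for $i \in I$, as claimed.

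There is essentially no hard step: the only subtlety is justifying the use of Proposition~\ref{prop:invasion}(iii), which requires checking that $\mu$ is ergodic (given) and that the $i$-th coordinate is $\mu$-a.s.\ positive for $i\in I$ (which follows from the hypothesis on $\supp\mu$). The compactness guaranteed by the preceding lemma ensures all the integrals defining $\bar x_j$ and $\lambda_i(\mu)$ are finite, so no additional integrability issue arises.
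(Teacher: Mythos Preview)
Your argument is correct and follows essentially the same route as the paper: compute $\lambda_i(\mu)=\sum_j A_{ij}\int x_j\,\mu(dx)+\beta_i$, use Proposition~\ref{prop:invasion}(iii) to force the equations $\sum_j A_{ij}\bar x_j+\beta_i=0$ for $i\in I$, and invoke uniqueness to identify $\bar x$ with $\hat x$. One small remark: Proposition~\ref{prop:invasion}(iii) does not actually require ergodicity, only that $\mu(\{x:x_i>0\})=1$, so your stated ``subtlety'' is even milder than you suggest.
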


\begin{proof}
Let $\mu$ and $I$ be as assumed in the statement of the lemma. Assertion $(iii)$ of Proposition~\ref{prop:invasion}  implies that
\[
0=\lambda_i(\mu)= \sum_j  A_{ij} \int x_j\,\mu(dx)+\beta_i
\]for all $i\in I$. Since we have assumed there is a unique solution to this system of linear equations, it follows that $\int x_i \mu(dx)=\widehat x_i$ for all $i$ and the lemma follows.
 \end{proof}

\subsection{Stochastic replicator dynamics} A particular case of the continuous time equations \eqref{eq:sde} is given by  the {\em stochastic replicator dynamics} introduced by \cite{fudenberg-harris-92}. Assume that the {\em fitness} of population $i$ is described by a function $f_i : \Delta \mapsto \R$ of the state variable $x$ and that the per capita growth rate of the number of individuals in population $i$ is stochastic, given by  sum of the fitness of strategy $i$ and a standard Brownian motion $B_i(t)$:
\begin{equation}\label{eq:stgrowth}
dY_t^i = Y^i_t \left(f_i(X_t) + \sigma_i dB_t^i\right),
\end{equation}
where $X^i_t = Y^i_t/ \sum_j Y^j_t$ and $\sigma_i > 0$. Then the law of motion for the state $X_t$ can be obtained via a straightforward application of Ito's formula and takes the form (\ref{eq:sde})
with
\[F_i(x) = f_i(x) - \sigma_i^2 x_i - \sum_j x_j(f_j(x) - \sigma_j^2 x_j)\]
and
\[\Sigma_i^j(x) = (\delta_{ij} -x_j)\sigma_j.\]

If there are only two types (i.e. $k=2$), then the only ergodic measures on $ \Delta_0$ are the Dirac measures at $x=(1,0)$ and $x=(0,1)$, respectively. At these Dirac measures, the invasion rates are given by
\begin{center}
\begin{tabular}{c|cc}
$\mu$ & $\lambda_1(\mu)$ & $\lambda_2(\mu)$ \\\hline
$\delta_{(1,0)}$& $0$&$ f_2(1,0)-f_1(1,0)-\frac{1}{2}(\sigma_2^2-\sigma_1^2)$\\
$\delta_{(0,1)}$& $ f_1(0,1)-f_2(0,1)-\frac{1}{2}(\sigma_1^2-\sigma_2^2)$& $0$
\end{tabular}
\end{center}
Hence, both strategies persist if
\[
f_1(0,1)-\frac{\sigma_1^2}{2}>f_2(0,1)-\frac{\sigma_2^2}{2}
\]
and
\[
f_2(1,0)-\frac{\sigma_2^2}{2}>f_1(1,0)-\frac{\sigma_1^2}{2}
\]
When these inequalities are satisfied, one can solve explicitly for the density of the positive stationary distribution of $X^1=x$ on $[0,1]$  (see, e.g., \cite{kimura-64})
\[
\rho(x)=\frac{C}{V(x)} \exp\left(2 \int \frac{F_1(x,1-x,0)}{V(x)} dx\right)
\]
where $V(x)=x^2(1-x)^2(\sigma_1^2+\sigma_2^2)$ and $C$ is a normalization constant. For example, if $f_i$ are linear functions, then this stationary distribution is given by a beta distribution as we illustrate in the next example.

Since we can solve for non-trivial ergodic measures for two interacting types, we can derive explicit conditions for persistence of three interacting types. As an illustration,  consider three interacting types with per-capita growth rates $f_1(x)=r_1+b\,x_3$, $f_2(x)=r_2$, and $f_3(x)=r_3-c\,x_1$. Here, interactions between types $1$ and $3$ provide a benefit $b>0$ to type $1$ and a cost $c>0$ to type $3$. To allow for coexistence, we assume the following tradeoff $r_3-\sigma_3^2/2>r_2-\sigma_2^2/2>r_1-\sigma_1^2/2$.

Our analysis begins with pair-wise interactions.  When  type $1$ is not present i.e. $Y^1_0=0$,  the remaining types $i=2,3$ exhibit geometric Brownian motions of the form $Y_t^i=Y_0^i \exp\left( (r_i -\sigma_i^2/2)t +\sigma_i B_t^i\right)$ where $B_t^i$ are independent Brownian motions. Since $r_3-\sigma_3^2/2>r_2-\sigma_2^2/2$,  $X_t$ converges almost surely to $(0,0,1)$  whenever $Y_0^1=0$ and $Y_0^3>0$. Similarly,  when type $3$ isn't present i.e. $Y_0^3=0$, the remaining types $i=1,2$ exhibit geometric Brownian motions and $X_t$ converges almost surely to $(0,1,0)$ whenever $Y_0^2>0$. To determine the outcome of the pairwise interaction between genotypes $1$ and $3$, we need the invasion rates
\[
\lambda_1(0,0,1)=r_1+b -\sigma_1^2/2 -r_3+\sigma_3^2/2
\mbox{ and } \lambda_3(1,0,0)=r_3-c -\sigma_3^2/2 -r_1+\sigma_1^2/2
\]
Both of these invasion rates are positive provided that
\begin{equation}\label{eq:coexist}
b>r_3 -\sigma_3^2/2 -r_1+\sigma_1^2/2>c
\end{equation}
When \eqref{eq:coexist} holds, there is a unique invariant measure $\mu$ on  $\{x\in \Delta: x_1x_3>0,x_2=0\}$  whose density $\rho(x_1):=\rho(x_1,1-x_1)$ is given by
\[
\rho(x_1)= \frac{C}{x_1^2(1-x_1^2)(\sigma_1^2+\sigma_2^2)} 
\exp\left( 2\int \frac{r_1+b(1-x_1)-r_3+cx_1+\sigma_2^2 (1-x_1)-\sigma_1^2 x_1}{x_1^2(1-x_1^2)(\sigma_1^2+\sigma_2^2)}dx\right) 
\]
which upon integration yields
\begin{equation}\label{eq:density}
\rho(x_1)=\frac{x_1^{\alpha-1}x_3^{\beta-1}}{B(\alpha,\beta)}
\end{equation}
where $B(\alpha, \beta)$ is a normalization constant and 
\[
\alpha=\frac{2(\sigma_3^2-r_3+r_1+b)}{\sigma_1^2+\sigma_3^2}-1
\qquad
\beta=\frac{2(\sigma_1^2+r_3-r_1-c)}{\sigma_1^2+\sigma_3^2}-1.
\]
\citet[Proposition 1]{fudenberg-harris-92} provide a detailed derivation of this stationary distribution for linear $f_1$ and $f_3$.

To understand the fate of the three interacting genotypes, there are (generically) three cases to consider. First, assume that \eqref{eq:coexist} is satisfied. The invasion rate for type $2$ at the invariant measure $\mu$, see \eqref{eq:density}, is given by
\begin{equation}\label{eq:invade}
\lambda_2(\mu)=\frac{b\sigma_3^2-(b-c)\sigma_2^2-c\sigma_1^2-2br_3+2(b-c)r_2+2cr_1+2bc}{2(b-c)}
\end{equation}
Whenever $\lambda_2(\mu)>0$, Theorem 4 ensures there is a unique positive stationary distribution on $\Delta$ by choosing $p_3\gg p_2 \gg p_1>0$.  Since \eqref{eq:coexist} implies $b-c>0$, \eqref{eq:invade} implies that stochastic fluctuations in genotype $3$'s per-capita growth rate can mediate coexistence, while stochastic fluctuations in the per-capita growth rates of the other two genotypes can disrupt coexistence.

Next, assume that \eqref{eq:coexist} doesn't hold. If $b<r_3-\sigma_3^2/2-r_2+\sigma_2^2/2$, then the invasion rates $\lambda_1(0,0,1)$ and $\lambda_2(0,0,1)$ are both negative and we conjecture coexistence doesn't occur. Alternatively if $b,c>r_3-\sigma_3^2/2-r_2+\sigma_2^2/2$, then the boundary dynamics exhibit a rock-paper-scissor dynamic and the only ergodic invariant measures are the Dirac measures at the vertices.  At these ergodic measures, the invasion rates are given by
\begin{center}
\begin{tabular}{c|ccc}
$\mu$ & $\lambda_1(\mu)$ & $\lambda_2(\mu)$ &$\lambda_3(\mu)$ \\\hline
$\delta_{(1,0,0)}$& $0$&$r_2-\sigma_2^2/2-r_1+\sigma_1^2/2$& $r_3-c-\sigma_3^2/2-r_2+\sigma_2^2/2$\\
$\delta_{(0,1,0)}$&  $r_1-\sigma_1^2/2-r_2+\sigma_2^2/2$& 0&  $r_3-\sigma_3^2/2-r_2+\sigma_2^2/2$\\
$\delta_{(0,0,1)}$& $r_1+b-\sigma_1^2/2-r_3+\sigma_3^2/2$&   $r_2-\sigma_2^2/2-r_3+\sigma_3^2/2$&0
\end{tabular}
\end{center}
A standard computation yields that the persistence criterion is satisfied when product of the positive invasion rates is greater than the product of the absolute value of the negative invasion rates. This occurs when $b>c$. Hence, for this rock-paper-scissor dynamic, environmental stochasticity has no effect on coexistence.

\section{Discussion\label{sec:discussion}}

Understanding under what conditions interacting populations, whether they be plants, animals, or viral particles, coexist is a question of theoretical and practical importance in population biology. Both biotic interactions and environmental fluctuations are key factors that facilitate or disrupt coexistence. To better understand this interplay between these deterministic and stochastic forces, we develop a mathematical theory  extending the nonlinear theory of permanence for deterministic systems to randomly forced nonlinear systems. This theory provides a biologically interpretable criterion for coexistence in the sense of stochastic boundedness~\citep{chesson-78,chesson-82}. Using this theory, we illustrate that environmental noise enhances or inhibits coexistence in communities with rock-paper-scissor dynamics, has no effect on coexistence in certain Lotka-Volterra communities, and can promote or inhibit genetic diversity.

Our condition for coexistence requires that there is a fixed set of weights associated with the interacting populations and this weighted combination of populations' invasion rates is positive for any (ergodic) stationary distribution associated with a subcollection of populations. This criterion is the stochastic analog of a permanence criterion for deterministic systems~\citep{hofbauer-81,jde-00,garay-hofbauer-03}. Since these invasion rates, defined as the average per-capita growth rates on the stationary distribution, equal zero for populations supported by the stationary distribution, this criterion requires that a missing population has a positive invasion rate. Hence, for pair-wise interactions, this criterion reduces to the ``mutual invasibility'' criterion.  When this condition holds and there is sufficient noise in the system (i.e. irreducible), we have shown the populations approach a unique positive stationary distribution whenever all types are initially present. Hence, the probability that the abundance of any population falls below a critical threshold is arbitrarily small for sufficiently small thresholds. Moreover, the fraction of time any population spends below this threshold is arbitrarily small for sufficiently small thresholds.

The need for this generalization of the mutual invasiblity criterion is illustrated in the deterministic literature by models of communities exhibiting rock-paper-scissor type dynamics~\citep{hofbauer-sigmund-98}. Here, we extended this analysis to stochastic counterparts of these models. If we assume that dominant strategies (e.g. rock) in these models gain a benefit $b_t$ when playing subordinate strategies (e.g. scissor) and subordinate strategies pay a cost $c_t$ when playing dominant strategies, then  our coexistence condition \eqref{eq:nec} for long-lived individuals becomes
\begin{equation}\label{disc}
\E\left[ \frac{b_t}{\beta_t}\right] > \E\left[ \frac{c_t}{\beta_t}\right]
\end{equation}
where $\beta_t$ is the ``base'' payoff. The effect of stochasticity in $b_t$, $c_t$, and $\beta_t$ on whether this criterion holds depends on the correlations between the various payoffs. Negative correlations between base payoffs and benefits (i.e. getting large benefits when base payoffs are small)  makes \eqref{disc} more likely to hold, while negative correlations between base payoffs and costs make it less likely to hold. Hence, stochasticity can facilitate coexistence when there are negative correlations between benefits and base-payoffs, but inhibit coexistence when there are positive correlations between benefits and base-payoffs.

For three competing genotypes in which the genotypes with the highest per-capita growth rate is exploited by the genotype with the lowest per-capita growth rate, we have shown that the effect of environmental fluctuations on coexistence is subtle. When the three genotypes exhibit a rock-paper-scissor dynamic, stochastic fluctuations have no effect on the coexistence criterion; coexistence requires that the benefit to the exploiter exceed the cost paid by the exploited. When there is no rock-paper-scissor dynamic, fluctuations in the per-capita growth rate of the exploited genotype can enhance diversity, while fluctuations in the other two genotypes can disrupt coexistence. Since this noise-induced coexistence occurs in populations with overlapping generations (i.e. a stochastic differential equation model), these results partially support \citet{ellner-sasaki-96}'s assertion that ``fluctuating selection can readily maintain genetic variance in species where generations overlap in such a way that only a fraction of the population is exposed to selection.''

We also have shown that stochastic variation in mortality or disturbance rates have no effect the coexistence criteria for discrete-time Lotka Volterra models developed by \citet{hofbauer-etal-87}. This surprising outcome stems from the fact that the per-capita growth rates in these models are linear functions in the population abundances. Adding non-linearities (e.g. predator saturation) to the per-capita growth rates will alter this conclusion, but the nature of this alteration remains to be understood.

Numerous mathematical challenges remain at this interface of random forcing and biotic interactions. For example, do the same criteria hold when there are temporal correlations in the environmental variables? We suspect the answer is yes. Alternatively, we conjecture that inverting the coexistence criterion (i.e. a convex combination of invasion is negative for all stationary distributions supporting subsets of species) implies an asymptotic approach to extinction with probability one. While this conjecture has been proven for stochastic differential equations with small diffusion terms~\citep{benaim-etal-08}, it needs to be shown for stochastic difference equations or stochastic differential equations with large diffusion terms. Finally, only recently have invasion-based permanence  criteria  been developed for deterministic models of  structured interacting populations~\citep{jde-10}. These structured models can account for heterogeneity amongst individuals in terms of the location, size, and age. Developing a mathematical framework to deal with these heterogeneities is an exciting challenge that would help us understand how  interactions between individual heterogeneity, temporal heterogeneity, and and biotic interactions determine diversity.

\paragraph{Acknowledgements.} SJS was supported by United States National Science Foundation Grant  DMS-0517987  and MB was supported by Swiss National Foundation  Grant 200021-103625/1.

\bibliography{../../seb}

\appendix

\section{Proofs for discrete time models}

\subsection{Proof of Proposition~\ref{prop:invasion}}
\begin{lemma}
\label{th:birk2}
Let $g : \S \times E \mapsto \R$ be a measurable map such that $\sup_{x\in C} \int g(x,\xi)^2 m(d\xi)<\infty. $ Define $\bar{g}(x) = \int g(x,\xi) m(d\xi).$ Then
\begin{itemize}
\item[(i)] For all $x \in \S$ and $X_0 = x$
\[\lim_{t \to  \infty} \frac{\sum_{s = 0}^{t-1} g(X_s,\xi_{s+1}) -
 \sum_{s = 0}^{t-1} \bar{g}(X_s)}{t} = 0.\]
with probability one.

\item[(ii)] Let $\mu$ be an  invariant (respectively ergodic) probability measure for $(X_t)$, then there exists a bounded measurable map $\hat{g}$ such that with probability one and for $\mu$-almost every $x$ \[\lim_{t \to  \infty} \frac{\sum_{s = 0}^{t-1} g(X_s,\xi_{s+1})}{t} = \lim_{t \to \infty}  \frac{\sum_{s = 0}^{t-1} \bar{g}(X_s)}{t} = \hat{g}(x)
\mbox{ when }X_0=x.\]
 Furthermore $$\int \bar{g}(x) \mu(dx) = \int \hat{g}(x) \mu(dx) \mbox{ (respectively }
\hat{g}(x) = \int \bar{g}(x) \mu(dx) \quad \mu-\mbox{almost surely} ).$$
\end{itemize}
\end{lemma}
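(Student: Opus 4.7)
The plan is to read the lemma as a Birkhoff-type ergodic theorem for the Markov chain $(X_t)$ driven by i.i.d.\ noise $(\xi_t)$. The key decomposition is
\[
\sum_{s=0}^{t-1} g(X_s,\xi_{s+1}) \;=\; \sum_{s=0}^{t-1}\bigl(g(X_s,\xi_{s+1})-\bar g(X_s)\bigr) \;+\; \sum_{s=0}^{t-1} \bar g(X_s),
\]
whose first piece is a martingale (handled by a martingale SLLN, giving (i)) and whose second piece is a Birkhoff average of the bounded function $\bar g$ (handled by the Birkhoff ergodic theorem, giving (ii) after combining with (i)). A preliminary observation I would record is that $\bar g$ is uniformly bounded on $\S$: Cauchy--Schwarz gives $\bar g(x)^2 \le \int g(x,\xi)^2\, m(d\xi) \le K$, where $K = \sup_{x\in\S}\int g(x,\xi)^2\, m(d\xi) < \infty$. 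This is what lets us apply Birkhoff to $\bar g$ later, and it is also what uniformly controls the martingale increments in (i).

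For (i), I would work with the filtration $\mathcal F_s = \sigma(X_0, \xi_1, \ldots, \xi_s)$ and introduce the centered increments $M_s = g(X_s,\xi_{s+1}) - \bar g(X_s)$. Since $X_s$ is $\mathcal F_s$-measurable and, by \textbf{A2}, $\xi_{s+1}$ is independent of $\mathcal F_s$, a routine Fubini argument yields $\E[g(X_s,\xi_{s+1})\mid \mathcal F_s] = \bar g(X_s)$, so $(M_s)$ is a martingale difference sequence. The uniform $L^2$ bound $\sup_s \E[M_s^2] \le 4K$ then implies $\sum_{s\ge 1}\E[(M_s/s)^2] < \infty$; since the $M_s$ are orthogonal, Kolmogorov's one-series theorem gives almost-sure convergence of $\sum_{s\ge 1} M_s/s$, and Kronecker's lemma then delivers $t^{-1}\sum_{s=0}^{t-1} M_s \to 0$ almost surely. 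This holds for every deterministic initial condition $X_0 = x \in \S$, which is precisely (i).

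For (ii), I would apply the standard Birkhoff ergodic theorem for stationary Markov chains to the bounded measurable function $\bar g$ under the invariant probability $\mu$. This produces a measurable function $\hat g$, bounded by $\|\bar g\|_\infty$, such that for $\mu$-almost every $x$, with $\P_x$-probability one, $t^{-1}\sum_{s=0}^{t-1}\bar g(X_s) \to \hat g(x)$, with $\int\hat g\, d\mu = \int \bar g\, d\mu$; when $\mu$ is ergodic the invariant $\sigma$-algebra of the shift is trivial, so $\hat g$ is $\mu$-a.e.\ equal to $\int\bar g\, d\mu$. Adding the result of (i) to this convergence identifies the same limit $\hat g(x)$ for $t^{-1}\sum_{s=0}^{t-1} g(X_s,\xi_{s+1})$, which is (ii).

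I do not anticipate a serious obstacle; the argument is essentially the combination of a martingale SLLN with Birkhoff's theorem. The point demanding the most care is the justification of the identity $\E[g(X_s,\xi_{s+1})\mid \mathcal F_s] = \bar g(X_s)$, which is where assumption \textbf{A2} (independence of $\xi_{s+1}$ from $\mathcal F_s$) and the measurability half of \textbf{A3} are really used, via Fubini on the product measure on $E$. The only other mildly technical step is translating the $\P_\mu$-a.s.\ conclusion of Birkhoff's theorem into the ``for $\mu$-almost every $x$, $\P_x$-almost surely'' formulation quoted in the lemma, which is a standard disintegration of $\P_\mu$ over the initial law $\mu$.
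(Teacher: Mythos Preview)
Your proposal is correct and follows essentially the same route as the paper: the authors likewise observe that $g(X_s,\xi_{s+1}) - \bar g(X_s)$ is a square-integrable martingale difference and invoke the martingale strong law of large numbers for (i), and then appeal to Birkhoff's ergodic theorem for stationary Markov chains (citing Meyn--Tweedie, Theorem 17.1.2) for (ii). Your write-up simply makes explicit the filtration, the Cauchy--Schwarz bound showing $\bar g$ is bounded, and the Kolmogorov--Kronecker mechanism behind the martingale SLLN, all of which the paper leaves implicit.
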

\begin{proof} The first assertion follows from the strong law of large number for martingales, since
$g(X_s,\xi_{s+1}) - Pg(X_s)$ is a square integrable martingale difference.
The second assertion follows from  Birkhoff's ergodic theorem applied to stationary Markov Chains (see \cite{meyn-tweedie-93}, Theorem 17.1.2)
\end{proof}
The first two assertions of Proposition~\ref{prop:invasion}  follow directly from the preceding  lemma applied to $g(x,\xi) = \log f_i(x,\xi)$. For the third assertion, notice that,  by assertion $(i)$ of Proposition~\ref{prop:invasion}
\[\lim_{t \rightarrow \infty} \frac{\log X^i_t}{t} = \hat{\lambda}_{i}(x)\] for $\mu$-almost all $x \in \iS.$
Let $\S^{i,\eta} = \{x \in \S \: : x_i \geq \eta\}$ and $\eta^*>0$ be such that $\mu(\S^{i,\eta})>0$ for all $\eta\le \eta^*$. By Poincar\'e Recurrence Theorem, for $\mu$ almost all $x \in \S^{i,\eta}$
\[\P_x[ X_t \in \S^{i,\eta} \mbox{ infinitly often } ] = 1\]
for $\eta\le\eta^*$.
Thus $\hat{\lambda}_{i}(x) = 0$ for $\mu$-almost all $x \in \S^{i,\eta}$ with $\eta\le \eta^*$.
Hence $\hat{\lambda}^{i}(x) = 0$ for $\mu$-almost all $x \in \bigcup_{n \in \mathbf{N}} \S^{i,1/n} = \{x \in \S \: : x_i > 0\}.$ This proves assertion $(iii).$

\subsection{Proof of Theorem~\ref{thm:persistence}}
The proof of the first assertion of the theorem follows from the following lemma.
\begin{lemma}
\label{hyp}
The following two conditions are equivalent:
\begin{enumerate}
\item[(i)] For all invariant probability measures $\mu$  supported on $\S_0$,
\[
\lambda_*(\mu) := \max_i  \lambda_i(\mu)>0
\]
\item[(ii)]
There exists $p\in \Delta$ such that
\[
\sum_i p_i \lambda_i (\mu) >0
\]
for all ergodic probability measures $\mu$  supported by $\S_0.$
\end{enumerate}
\end{lemma}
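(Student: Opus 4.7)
The plan is to treat (ii)$\Rightarrow$(i) as essentially immediate from the ergodic decomposition, and to treat (i)$\Rightarrow$(ii) as a finite-dimensional separating-hyperplane argument applied to the set of invasion-rate vectors generated by invariant measures on $\S_0$.

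For (ii)$\Rightarrow$(i), suppose $p \in \Delta$ realizes (ii) and let $\mu$ be any invariant probability measure on $\S_0$. Since $\{X_t\}$ is Feller and $\S$ is compact, $\mu$ admits an ergodic decomposition $\mu = \int \mu_\omega \, \nu(d\omega)$ whose components $\mu_\omega$ are ergodic; as $\S_0$ is closed and forward-invariant, each $\mu_\omega$ is supported on $\S_0$. Using that $\mu \mapsto \lambda_i(\mu)$ is linear, I would integrate the strict inequality in (ii) against $\nu$ to get $\sum_i p_i \lambda_i(\mu) > 0$, and this forces $\max_i \lambda_i(\mu) > 0$ since the weights $p_i$ are nonnegative and sum to one.

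For the reverse direction (i)$\Rightarrow$(ii), let $\mathcal{M}_0 \subset \PS$ be the set of invariant probability measures supported on $\S_0$, and let $K = \{\Lambda(\mu) : \mu \in \mathcal{M}_0 \} \subset \R^k$ where $\Lambda(\mu) = (\lambda_1(\mu), \ldots, \lambda_k(\mu))$. The plan is to establish, in turn: (a) $\mathcal{M}_0$ is convex and weak-$*$ compact (closed subset of the set of invariant measures of a Feller chain on a compact state space); (b) $\mu \mapsto \Lambda(\mu)$ is affine and weak-$*$ continuous; hence (c) $K \subset \R^k$ is convex and compact. Hypothesis (i) then says precisely that $K$ is disjoint from the closed convex cone $\mathcal{N} := \{v \in \R^k : v_i \le 0 \text{ for all } i\}$. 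Applying the strict separating-hyperplane theorem to the disjoint pair ($K$ compact, $\mathcal{N}$ closed) produces $p \in \R^k$ and $\alpha \in \R$ with $p \cdot v > \alpha > p \cdot w$ for all $v \in K$ and $w \in \mathcal{N}$. Since $\mathcal{N}$ is a cone containing the origin, the uniform upper bound on $\mathcal{N}$ forces $\alpha > 0$ (take $w=0$) and $p \in \R^k_+$ (otherwise $p \cdot (tw) \to +\infty$ along a suitable ray in $\mathcal{N}$); normalizing so that $p \in \Delta$, we conclude $\sum_i p_i \lambda_i(\mu) > \alpha > 0$ for every $\mu \in \mathcal{M}_0$, in particular for every ergodic measure on $\S_0$.

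The one technical point I expect to have to dwell on is the continuity claimed in step (b), which reduces to continuity of $\lambda_i(x) = \int \log f_i(x,\xi) \, m(d\xi)$ on the compact set $\S$. Given A3, for any sequence $x_n \to x$ in $\S$ one has $\log f_i(x_n,\xi) \to \log f_i(x,\xi)$ pointwise in $\xi$, while A4 supplies a uniform $L^2(m)$ bound on the family $\{\log f_i(x,\cdot) : x \in \S\}$, hence uniform integrability; Vitali's convergence theorem then yields $\lambda_i(x_n) \to \lambda_i(x)$, so $\lambda_i$ is bounded and continuous on $\S$, and weak-$*$ continuity of $\mu \mapsto \int \lambda_i \, d\mu$ and thus compactness of $K$ follow. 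This uniform-integrability step is the only place where A4 is genuinely used and is the one that would require care in a full write-up.
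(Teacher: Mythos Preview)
Your argument is correct. The (ii)$\Rightarrow$(i) direction matches the paper's implicit use of the ergodic decomposition; for (i)$\Rightarrow$(ii) the paper instead invokes a minimax theorem: writing $\max_i \lambda_i(\mu) = \max_{p\in\Delta}\sum_i p_i\lambda_i(\mu)$, it applies a bilinear minimax over the compact convex sets $\Delta$ and $\mathcal{M}_0$ to obtain
\[
\min_{\mu\in\mathcal{M}_0}\max_i \lambda_i(\mu)=\max_{p\in\Delta}\min_{\mu\in\mathcal{M}_0}\sum_i p_i\lambda_i(\mu),
\]
and then notes (via ergodic decomposition) that the inner minimum is attained at an ergodic measure. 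Your separating-hyperplane route is essentially the geometric content of that minimax step, applied directly to the image set $K=\Lambda(\mathcal{M}_0)\subset\R^k$: both approaches rest on exactly the same ingredients you isolate---convexity and weak-$*$ compactness of $\mathcal{M}_0$ and continuity of $x\mapsto\lambda_i(x)$---but the paper leaves those unstated, whereas you make them explicit (and correctly identify A4 via uniform integrability as what secures the continuity). The minimax formulation is a bit more compact and yields the equivalence in one line; your version is more self-contained and makes transparent both why $p$ can be taken in $\Delta$ and why a uniform positive lower bound $\alpha>0$ comes for free.
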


\begin{proof}To see the equivalence of the conditions we need the following version of the minimax theorem (see, e.g., \cite{simmons-98}):

\begin{theorem}[Minimax theorem] Let $A,B$ be Hausdorff topological vector spaces and let
$\Gamma : A \times B \to \R$ be a continuous bilinear function. Finally, let $E$ and $F$ be nonempty,
convex, compact subsets of $A$ and $B$, respectively. Then
\[
\min_{a\in E}\max_{b\in F}\Gamma(a, b) = \max_{b\in F}\min_{a\in E} \Gamma(a,b)
\]
\end{theorem}

We have that
\[
\min_\mu \max_i \lambda_i (\mu) = \min_\mu\max_{p\in \Delta} \sum_i p_i \lambda_i(\mu)
\]
where the minimum is taken over invariant probability measures $\mu$ with support in $\S_0$.
Define $A$ to be the dual space to the space bounded continuous functions from $\S_0$ to $\R$ and
define $B=\R^k$. Let $D\subset A$ be the set of invariant probability measures and
$E=\Delta$. With these choices, the Minimax theorem  implies that
\begin{equation}\label{eq:minmax}
\min_\mu \max_i \lambda_i (\mu) = \max_{p\in \Delta} \min_\mu \sum_i p_i \lambda_i(\mu)
\end{equation}
where the minimum is taken over invariant probability measures $\mu$ with support in $\S_0$. By the ergodic decomposition theorem~\cite{mane-83}, the minimum of the right hand side of \eqref{eq:minmax}
is attained at an ergodic probability measure with support in $\S_0$. Thus, the equivalence of the conditions
is established. \end{proof}

The proof of the second assertion of the theorem follows from the next two lemmas.

\begin{lemma} For all $\epsilon>0$, there exists a $\eta>0$ such that
$\mu(\S_\eta)\le \epsilon$
for every invariant probability measure $\mu$ with $\mu(\iS)=1$.
\end{lemma}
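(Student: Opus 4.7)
The plan is to argue by contradiction. Suppose the lemma fails: there exist $\epsilon > 0$, a sequence $\eta_n \downarrow 0$, and invariant probability measures $\mu_n$ with $\mu_n(\iS) = 1$ but $\mu_n(\S_{\eta_n}) > \epsilon$. By weak-$*$ compactness of $\PS$, extract a subsequential limit $\mu^* \in \PS$; the Feller property (guaranteed by A2 and A3) makes the set of $P$-invariant probabilities weak-$*$ closed, so $\mu^*$ is invariant. Each $\S_\eta$ is closed, and $\S_{\eta_n} \subset \S_\eta$ eventually for every fixed $\eta > 0$, so the Portmanteau theorem gives $\mu^*(\S_\eta) \ge \limsup_n \mu_n(\S_\eta) \ge \epsilon$; letting $\eta \downarrow 0$ yields $\mu^*(\S_0) \ge \epsilon$.

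Next I show that $\mu \mapsto \lambda_i(\mu)$ is weak-$*$ continuous on $\PS$ for every $i$, which reduces to showing that $\lambda_i(x) = \int \log f_i(x,\xi)\, m(d\xi)$ is continuous (hence bounded) on the compact set $\S$. Pointwise continuity of $\log f_i(\cdot,\xi)$ in $x$ comes from A3, and A4 supplies the uniform $L^2$-control in $\xi$ needed to obtain uniform integrability of $\{\log f_i(x,\cdot)\}_{x \in \S}$, permitting passage of the limit inside the integral. Continuity of the bounded function $\lambda_i$ on $\S$ then translates, via the definition of weak-$*$ convergence, to continuity of the affine functional $\mu \mapsto \int \lambda_i\, d\mu$ on $\PS$.

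The contradiction is now extracted using the weight vector $p \in \Delta$ supplied by hypothesis (ii) of Theorem~\ref{thm:persistence}. For each $n$, the hypothesis $\mu_n(\iS)=1$ gives $\mu_n(\{x_i>0\})=1$ for every $i$, so Proposition~\ref{prop:invasion}(iii) forces $\lambda_i(\mu_n)=0$ and thus $\sum_i p_i\lambda_i(\mu_n)=0$; passing to the limit, $\sum_i p_i \lambda_i(\mu^*)=0$. On the other hand, since both $\S_0$ and $\iS$ are invariant under the dynamics, I write $\mu^* = a\mu^*_0 + (1-a)\mu^*_1$ with $a := \mu^*(\S_0) \ge \epsilon$, where $\mu^*_0$ and $\mu^*_1$ are the invariant probabilities obtained by restricting and renormalizing $\mu^*$ on $\S_0$ and $\iS$ (the $\mu^*_1$ term dropped if $a=1$). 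Proposition~\ref{prop:invasion}(iii) kills $\sum_i p_i\lambda_i(\mu^*_1)$; hypothesis (ii) combined with the ergodic decomposition (exactly as in Lemma~\ref{hyp}) gives $\sum_i p_i \lambda_i(\nu) > 0$ for every invariant $\nu$ on $\S_0$, which upgrades to a uniform lower bound $r>0$ via the weak-$*$ compactness of invariant probabilities on $\S_0$ together with the continuity just established. Therefore $\sum_i p_i \lambda_i(\mu^*) \ge ar \ge \epsilon r > 0$, contradicting the equality.

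The main obstacle I anticipate is the continuity step for $\lambda_i$: since $f_i$ is only assumed measurable in $\xi$ and is neither bounded above nor bounded away from $0$, plain dominated convergence does not apply, and one must really exploit A4 to secure uniform integrability of $\{\log f_i(x,\cdot)\}_{x \in \S}$. A secondary, lighter subtlety is upgrading the pointwise positivity of $\sum_i p_i\lambda_i$ from ergodic to all invariant measures on $\S_0$ and then to a uniform positive bound; once the continuity is in hand this is a routine weak-$*$ compactness argument, but it is essential so that the final inequality carries the full factor $\epsilon r$ needed to close the contradiction.
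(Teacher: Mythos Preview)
Your proof is correct and follows essentially the same route as the paper's: contradiction, weak$^*$ limit of a sequence $\mu_n$ with $\mu_n(\iS)=1$ and $\mu_n(\S_{1/n})>\epsilon$, Proposition~\ref{prop:invasion}(iii) to kill the invasion rates along the sequence, then decomposition of the limit into a boundary part and an interior part to extract the contradiction. The only cosmetic difference is that the paper works with $\lambda_*$ and hypothesis~(i) directly (so the final line reads $0=\lambda_*(\mu)=\alpha\lambda_*(\nu_0)>0$), whereas you use the weighted sum $\sum_i p_i\lambda_i$ from hypothesis~(ii); since the two hypotheses are equivalent by Lemma~\ref{hyp}, either closes the argument. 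Your added care in deriving continuity of $x\mapsto\lambda_i(x)$ from A4 via uniform integrability is a point the paper leaves implicit, and your final uniform lower bound $r>0$ via compactness, while correct, is unnecessary---once $\mu_0^*$ is identified as a single invariant probability on $\S_0$, the strict positivity of $\sum_i p_i\lambda_i(\mu_0^*)$ already gives the contradiction.
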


\begin{proof} Suppose to the contrary, there exists $\epsilon>0$ and invariant measures $\mu_n$ such that $\mu_n(\iS)=1$ and $\mu(\S_{1/n})>\epsilon$ for all $n\ge 1$. By Proposition~\ref{prop:invasion}, $\lambda_*(\mu_n)=0$ for all $n$. Let $\mu$ be a weak* limit point of these measures. Then $\mu(\S_0)\ge \epsilon$ and $\lambda_*(\mu)=0$. Since $\S=\S_0\cup \iS$ and $\S_0$, $\iS$ are invariant, there exists $\alpha>0$ such that $\mu=\alpha \nu_0+(1-\alpha)\nu_1$ where $\nu_i$ are invariant measures satisfying $\nu_0(\S_0)=1$ and $\nu_1(\iS)=1$. By Proposition~\ref{prop:invasion}, $\lambda_*(\nu_1)=0$. By assumption, $\lambda_*(\nu_0)>0$. Hence, $0=\lambda_*(\mu)\ge \alpha \lambda_*(\nu_0)>0$, a contradiction.

\end{proof}

\begin{lemma}\label{lem:pi} For all $x\in \iS$, with probability one the set of weak* limit points  of $\Pi_t$ is a nonempty compact set consisting of invariant probabilities $\mu$  such that  $\mu(\iS)=1$.
\end{lemma}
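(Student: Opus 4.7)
The plan is to verify three things: (a) nonemptiness and compactness of the limit set, (b) invariance of each limit point, and (c) each limit point puts full mass on $\iS$. Part (a) is immediate from compactness of $\S$: the space $\PS$ is weak$^*$ compact, so $\{\Pi_t\}$ is automatically tight, and the set of limit points equals $\bigcap_{T\ge 1} \overline{\{\Pi_t : t\ge T\}}^{w^*}$, a nested intersection of nonempty weak$^*$ compact sets, hence nonempty and compact.

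For (b) I would use the standard martingale argument enabled by the Feller property. For any continuous bounded $h: \S \to \R$, write
\[
\int (Ph-h)\, d\Pi_t = \frac{1}{t}\sum_{s=1}^t [Ph(X_s) - h(X_{s+1})] + \frac{h(X_{t+1})-h(X_1)}{t}.
\]
The summands $h(X_{s+1}) - Ph(X_s)$ form a bounded martingale difference sequence, so the first term vanishes almost surely by the strong law for martingales (applied as in Lemma~\ref{th:birk2}(i)); the second vanishes since $h$ is bounded. Passing to the weak$^*$ limit along any subsequence $\Pi_{t_n}\to\mu$, using that $Ph$ is continuous by the Feller property, yields $\int Ph\,d\mu = \int h\,d\mu$. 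As this holds for all continuous bounded $h$, $\mu$ is invariant.

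For (c), which is the subtle step, I would exploit the upper bound on $\log X_t^i$ together with Proposition~\ref{prop:invasion}. Since $X_0\in\iS$ and each face $\{x_i=0\}$ is invariant, $X_t^i>0$ for all $t$, so
\[
\frac{\log X_t^i}{t} = \frac{\log X_0^i}{t} + \frac{1}{t}\sum_{s=0}^{t-1}\log f_i(X_s,\xi_{s+1}).
\]
By Lemma~\ref{th:birk2}(i) applied with $g = \log f_i$ (which satisfies \textbf{A4}), this is $\frac{1}{t}\sum_{s=0}^{t-1}\lambda_i(X_s) + o(1)$ almost surely. Assumption \textbf{A4} also gives uniform $L^2$-integrability of $\log f_i(\cdot,\xi)$, from which $\lambda_i$ is bounded and continuous on $\S$ (via Vitali/uniform integrability). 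Hence along any subsequence $t_n$ with $\Pi_{t_n}\to\mu$, $\frac{1}{t_n}\sum_{s=0}^{t_n-1}\lambda_i(X_s)\to\int\lambda_i\,d\mu$. Since $X_t^i$ is bounded above by some $M<\infty$, $\frac{\log X_t^i}{t}\le \frac{\log M}{t}$, so $\int\lambda_i\,d\mu \le 0$ for every $i$ and every limit point $\mu$. Now suppose for contradiction that $\mu(\S_0)=\alpha>0$. Since $\S_0$ and $\iS$ are both forward invariant, conditioning $\mu$ on each yields a decomposition $\mu = \alpha\nu_0 + (1-\alpha)\nu_1$ with $\nu_0,\nu_1$ invariant, $\nu_0(\S_0)=1$ and $\nu_1(\iS)=1$. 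By Proposition~\ref{prop:invasion}(iii) applied to $\nu_1$, $\lambda_i(\nu_1)=0$ for every $i$, so $\int \lambda_i\,d\mu = \alpha\lambda_i(\nu_0)\le 0$ for all $i$, giving $\lambda_*(\nu_0)\le 0$. This contradicts the standing hypothesis of Theorem~\ref{thm:persistence}, so $\mu(\S_0)=0$.

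The main technical obstacle is justifying the interchange of the weak$^*$ limit with the integral of $\lambda_i$, which requires continuity (not just measurability) of $\lambda_i$; this is where \textbf{A4} gets used essentially, turning pointwise continuity of $\log f_i(\cdot,\xi)$ into continuity of the average $\lambda_i$. Everything else is a careful bookkeeping of almost-sure events (countably many, for each $i$ and each rational $\epsilon$).
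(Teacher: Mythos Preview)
Your proposal is correct and follows essentially the same route as the paper's proof: compactness of $\S$ for (a); the martingale-difference argument $h(X_{s+1})-Ph(X_s)$ (which is exactly the paper's application of Lemma~\ref{th:birk2}(i) with $g(x,\xi)=h(x\circ f(x,\xi))$) for (b); and for (c) the inequality $\limsup_t \tfrac{1}{t}\log X_t^i\le 0$ together with the decomposition $\mu=(1-\alpha)\nu_0+\alpha\nu_1$ and Proposition~\ref{prop:invasion}(iii). Your explicit remark that continuity of $\lambda_i$ is what licenses passing $\int\lambda_i\,d\Pi_{t_n}\to\int\lambda_i\,d\mu$, and that \textbf{A4} supplies it via uniform integrability, fills in a step the paper leaves implicit (the paper simply invokes ``continuity of $\lambda_*$'' elsewhere without justification); likewise your comment about bookkeeping countably many almost-sure events corresponds to the paper's appeal to separability of $C(\S)$.
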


\begin{proof}
 The process $\{X_t\}_{t=0}^\infty$ being a Feller Markov chain over a compact set $\S$, the set of weak* limit points of $\{\Pi_t\}_{t=0}^\infty$ is almost surely a non-empty compact subset of $\PS$ consisting of invariant probabilities. To see why this latter point is true, let $h:\S\to\R$ be continuous function and define
\[
g(x,\xi)=h(x\circ f(x,\xi)) \mbox{ and } \bar g(x) =\int_\S g(x,\xi)\,m(d\xi).
\] 
Since $X_{t+1}=X_t\circ f(X_t,\xi_{t+1})$, $h(X_{t+1})=g(X_t,\xi_{t+1})$ and 
\begin{eqnarray*}
\lim_{t\to\infty} \int_\S h(x) \,\Pi_t(dx) - \int_\S Ph(x) \,\Pi_t(dx)&= \lim_{t\to\infty}&\frac{1}{t}\left(\sum_{s=1}^{t}h(X_s)-\int_\S h(f(X_s,\xi)\circ X_s) \, m(d\xi)\right)\\
&=& \lim_{t\to\infty}\frac{1}{t} \left(\sum_{s=0}^{t-1} g(X_{s},\xi_{s+1})-\bar g(X_s) \right)+\frac{1}{t}\left( \bar g(X_t)-\bar g(X_0)\right)\\
&=& 0 \mbox{ almost surely.}
\end{eqnarray*}
where the last line follows from assertion (i) of Lemma \ref{th:birk2}.
Hence, $\int_\S h(x) \, \mu(dx)= \int_\S Ph(x)\,\mu(dx)$ with probability one for  weak* limit points $\mu$ of $\Pi_t$. Since $\S$ is compact, the set of continuous functions from $\S$ to $\R$ is separable metric space and with probability one $\int_\S h(x) \, \mu(dx) = \int_\S Ph(x)\,\mu(dx)$ for all weak* limit points of $\Pi_t$ and all continuous functions $h:\S\to\R$. Thus, the weak* limit points of $\Pi_t$ are almost-surely invariant probability measures.

Assertion $(i)$ of  Lemma \ref{th:birk2} applied to $g(x,\xi) = \log(f_i(x,\xi))$ gives
we have
\begin{eqnarray*}
\lim_{t \to \infty} \frac{\log X^i_t-\log x_i  - \sum_{s=0}^{t-1} \lambda_i(X_s )}{t} = 0.
\end{eqnarray*}
 Since $\limsup_{t\to\infty}\frac{1}{t} \left(\log X^i_t-\log x_i\right) \le 0$ almost surely, we get that
 \begin{equation}
 \label{eq:noname}
 \lambda_*(\mu) \le 0
 \end{equation}
 almost surely for any weak* limit point $\mu$ of $\{\Pi_t\}_{t=0}^\infty$.

 Since $\S=\iS\cup \S_0$, $\iS$ is invariant, and $\S_0$ is invariant, there exists $\alpha\in (0,1]$ such that $\mu=(1-\alpha)\nu_0+\alpha \nu_1$ where $\nu_0$ is an invariant probability measure with $\nu_0( \S_0)=1$ and $\nu_1$ is an invariant probability measure with $\nu_1(\iS)=1$. By Proposition \ref{prop:invasion},  $\lambda_i(\nu_1) = 0$ for all $i.$ Thus,  $(1-\alpha) \lambda_i(\nu_0) \leq 0$ for all $i$. Since by assumption $\lambda_i(\nu_0) > 0$ for some $i,$ $\alpha$ must be $1.$

\end{proof}

\subsection{Proof of Theorems~\ref{thm:uniqueness} and \ref{thm:PHC}}
\begin{lemma}
\label{th:lem2}
There exists $\eta>0$ and $\epsilon > 0$ such that
\begin{enumerate}
\item[(i)] $\lambda_*(\mu) \geq \epsilon$ for every invariant probability $\mu$  with $\mu(\S_\eta) = 1$, and
\item[(ii)] $\P_x[X_t\notin \S_\eta$ for some $t]=1$ for all $x\in \iS$.
\end{enumerate}
\end{lemma}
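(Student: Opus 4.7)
The plan is to prove (i) by a weak* compactness argument on the space of invariant measures, and then derive (ii) by reusing the martingale/occupation-measure argument that established inequality \eqref{eq:noname} in the proof of Lemma~\ref{lem:pi}. An observation underlying both parts is that the per-capita growth rate $\lambda_i:\S\to\R$ is bounded and continuous: boundedness follows from A4 (the uniform $L^2(m)$ bound implies a uniform $L^1(m)$ bound on $\log f_i$), while continuity follows from the Vitali convergence theorem, since A3 gives pointwise convergence of $\log f_i(x_n,\xi)$ as $x_n\to x$ and A4 supplies uniform integrability of that family. Consequently $\mu\mapsto\lambda_i(\mu)=\int\lambda_i\,d\mu$ is weak* continuous on $\PS$, and hence so is $\lambda_*(\mu)=\max_i\lambda_i(\mu)$.

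For (i), suppose the conclusion fails. Then there exist $\eta_n\downarrow 0$ and invariant probabilities $\mu_n$ with $\mu_n(\S_{\eta_n})=1$ and $\lambda_*(\mu_n)<1/n$. By weak* compactness of $\PS$, extract a subsequential limit $\mu$. Since the chain is Feller (by A3), the set of invariant probabilities is weak* closed, so $\mu$ is invariant. For any fixed $\eta>0$ we eventually have $\S_{\eta_n}\subset\S_\eta$; since $\S_\eta$ is closed, the portmanteau theorem yields $\mu(\S_\eta)\geq\limsup_n \mu_n(\S_\eta)=1$, so $\mu(\S_0)=\mu\bigl(\bigcap_{\eta>0}\S_\eta\bigr)=1$. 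By the weak* continuity noted above, $\lambda_*(\mu)\le 0$, contradicting hypothesis (i) of Theorem~\ref{thm:persistence}.

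For (ii), fix $\eta,\epsilon>0$ as in (i) and let $x\in\iS$. Since each $f_i>0$, one has $X_t\in\iS$ almost surely for every $t$. Set $B=\{X_t\in\S_\eta\text{ for all }t\geq 0\}$ and suppose $\P_x(B)>0$. On $B$ the occupation measures $\Pi_t$ are supported on the closed set $\S_\eta$, so any weak* limit point $\mu$ of $\{\Pi_t\}$ satisfies $\mu(\S_\eta)=1$. By Lemma~\ref{lem:pi}, almost surely every such $\mu$ is invariant with $\mu(\iS)=1$, and part (i) then forces $\lambda_*(\mu)\ge \epsilon>0$. On the other hand, the derivation of inequality \eqref{eq:noname} in the proof of Lemma~\ref{lem:pi} — which only uses $X_t\in\iS$ together with the strong law of large numbers for martingales — yields $\lambda_*(\mu)\le 0$ for every weak* limit point of $\{\Pi_t\}$. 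This contradiction forces $\P_x(B)=0$, which is precisely (ii).

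The main obstacle is part (i), and specifically the verification that the limit measure $\mu$ concentrates on $\S_0$; once the weak* continuity of $\lambda_*$ on $\PS$ and the closedness of each $\S_\eta$ are in place, the contradiction is mechanical. Part (ii) is then essentially a transcription of the obstruction already used in Lemma~\ref{lem:pi}, with part (i) supplying the strict positive lower bound on $\lambda_*(\mu)$ that rules out trajectories trapped inside $\S_\eta$.
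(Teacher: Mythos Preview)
Your proof is correct and follows essentially the same route as the paper: a contradiction via weak* compactness and continuity of $\lambda_*$ for (i), then for (ii) the observation that on the trapping event every weak* limit point of $\Pi_t$ is an invariant measure concentrated on $\S_\eta$, which forces $\lambda_*(\mu)\ge\epsilon$ and contradicts \eqref{eq:noname}. You supply more detail than the paper does---the Vitali argument for continuity of $\lambda_i$, the portmanteau step for $\mu(\S_0)=1$, and the Feller closedness of invariant measures---all of which the paper leaves implicit; the extra invocation of $\mu(\iS)=1$ from Lemma~\ref{lem:pi} in part (ii) is not needed (only invariance and $\mu(\S_\eta)=1$ are used to apply (i)), but it does no harm.
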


\begin{proof}
To prove $(i)$, assume to the contrary that there exists a sequence $\{\mu_n\}_{n=1}^\infty$ of invariant probabilities such that $\lambda_*(\mu_n) \leq 1/n$
and $\mu_n(\S_{1/n}) = 1.$
Let $\mu$ be a weak* limit point of the $\{\mu_n\}_{n=1}^\infty$. Hence $\lambda_*(\mu) = 0$ by continuity of $\lambda_*,$ and $\mu(\S_0) = 1$ since $\mu_n(\S_a) = 1$ for all $a > 0$ and $n$ large enough. However, this contradicts the assumption that $\lambda_*(\mu)>0$. Hence, there exists $\epsilon>0$ and $\eta>0$ such that $(i)$ holds.

To prove $(ii)$, let $\mathcal{E}$ be the event $\mathcal{E} = \{\forall t \geq 0: X_t \in \S_\eta\}.$ On $\mathcal{E},$ $\Pi_t$ is almost surely  supported by $\S_\eta$. Hence, by $(i)$, $\lambda_*(\mu) \geq \epsilon$ almost surely on $\mathcal{E}$  for any weak* limit point $\mu$ of $\Pi_t$. This contradicts (\ref{eq:noname}) in the proof of Lemma~\ref{lem:pi}. Hence,  $\mathcal{E}$ has probability zero,
\end{proof}

We now pass to the proof of Theorem \ref{thm:uniqueness}.
Let $\eta > 0$ be like in Lemma \ref{th:lem2} (ii) and $\Phi$ the probability on $\S \setminus \S_{\eta}$ given by the irreducibility assumption.  Then, for all $x \in \iS$ and every Borel set $A \subset \S_{\eta}$ 
$\P_x[\exists n \geq 1 \: X_n \in A]> 0$ whenever $\Phi(A) > 0.$ In other words, $\{X_t\}$ is a {\em $\Phi-$irreducible Markov chain} on $\iS$ in the sense of \citet[Chapter 4, Section 4.2]{meyn-tweedie-93}. It then follows  (see \citet[Proposition 10.1.1, Theorem 10.4.4]{meyn-tweedie-93}) that $\{X_t\}$ admits at most one invariant probability measure on $\iS$ and  Theorem \ref{thm:uniqueness} follows from Lemma \ref{lem:pi}.
  
If one now assume that $\{X_t\}$ is strongly irreducible over $\S \setminus \S_{\eta}$ then $\{X_t\}$ becomes Harris recurrent and aperiodic on $\iS.$ Since, by Theorem  \ref{thm:uniqueness} it is a positive Harris chain,  Theorem \ref{thm:PHC} follows from Orey's theorem (see \citet[Theorem 18.1.2]{meyn-tweedie-93})

\subsection{Proof of Proposition \ref{thm:robpersistence}}
Assume to the contrary that there exists a sequence of fitness maps $\{g_n=(g_n^1,\dots,g_n^k)\}_{n=1}^\infty$ satisfying assumptions \textbf{A3}--\textbf{A4} such that
\begin{equation}
\label{eq:gnf}
\lim_{n \rightarrow \infty} \sup_{x\in\S} \E[\| g_n(x,\xi) - f(x,\xi)\|] = 0
\end{equation}
and
\[\max_{i} \int \log(g_n^i(x,\xi)) m(d\xi) \mu_n(dx) \leq 0\]
where $\mu_n$ is an invariant measure, supported by $\S_0,$ for the operator $P_n$ associated to the Markov chain
\[X_{t+1} = g_n(X_t, \xi_{t+1}) \circ X_t.\]
By compactness of $\S$ we may assume that $\mu_n\rightarrow \mu$ in the weak* topology.  Since $\alpha \leq f_i \leq \beta$ for all $i$ and (\ref{eq:gnf}) holds, it follows  that
\[
\lim_{n\to\infty} \int \log(g_n^i(x,\xi)) m(d\xi) \mu_n(dx) =\int \log(f_i(x,\xi)) m(d\xi) \mu(dx).\]
Hence, $\lambda_i(\mu)\le 0$ for all $i$.
It remains to prove that $\mu$ is invariant for $P$ to reach a contradiction.
Let $h : \S \mapsto \R$ be a continuous map.
Let $\epsilon > 0.$ By uniform continuity there exists $\delta > 0$  such that for all $x,u,v \in \S,$  \[\|u-v\| \leq \delta \Rightarrow |h(x \circ u)-h(x \circ v)| \leq \epsilon.\]Thus
\begin{eqnarray*}|P_nh(x) - Ph(x)| &=& |\E[ h(x \circ g_n(x,\xi)) - h(x \circ f(x,\xi))]| \\
&\leq& 2 \|h\| \P[ \|g_n(x,\xi)-f(x,\xi)\| \geq \delta] + \epsilon\\
&\leq & 2 \|h\|  \frac{\E[\|g_n(x,\xi)-f(x,\xi)\|]}{\delta} + \epsilon.
\end{eqnarray*}
It then follows from (\ref{eq:gnf}) that $\lim_{n\to\infty}P_nh(x) =Ph(x)$ uniformly in $x$. Therefore
\[\lim_{n\to\infty}\int P_n h (x) \mu_n(dx) = \int P h (x) \mu(dx).\]
Since by invariance of $\mu_n$ for $P_n$
\[
\lim_{n\to\infty}\int P_n h (x) \mu_n(dx) = \lim_{n\to\infty}\int h(x) \mu_n(dx)  =\lim_{n\to\infty} \int h(x) \mu(dx)\]
 we get
\[\int Ph(x) \mu(dx) = \int h(x) \mu(dx),\] proving that $\mu$ is invariant for $P.$
\section{Proofs for the continuous time models}

\subsection{Proof of Theorem \ref{thm:persistence2}}
 Let $\cL$ be the  infinitesimal generator of $\{X_t\}_{t\ge 0}.$ It
 acts on $C^2$ functions according to the formula
\begin{equation}
\label{eq:defL}
\cL \psi(x) = \sum_i \frac{\partial \psi}{\partial x_i}(x) x_i F_i(x)  + \cA\psi(x)
\end{equation}
where
\begin{equation}
\label{eq:defA}
\cA\psi(x) = \frac{1}{2}\sum_{i,j} x_i x_j a_{ij}(x) \frac{\partial^2 \psi}{\partial x_i x_j}(x)
\end{equation}
By Ito's formulae
\[\psi(X_t) - \psi(x) - \int_0^t \cL \psi(X_s) ds = M_t\] is a martingale given by $M_0 = 0$ and
\[dM_t = \sum_{i= 1}^k  \frac{\partial}{\partial x_i} (X_t) \sum_{j= 1}^m S_i^j(X_t) dB_t^j\]
where $S^j$ is the vector given by (\ref{defS}).
Applying this to $\psi(x) = \log(x_i)$
gives
\[\log(X_t^i) - \log(x^i) - \int_0^t \lambda_i(X_s) ds = M_t\]
with \[dM_t = \sum_{j = 1}^m \Sigma_i^j(X_t) dB_t^j.\]
Hence \[d \langle M \rangle_t = \sum_{j = 1}^m ((\Sigma_i^j(X_t))^2dt\] so that
\[\langle M \rangle_t \leq C t.\] Thus,  by the strong law of large numbers for martingales,
\[\lim_{t \to \infty} \frac{\log(X_t^i) - \log(x^i) - \int_0^t \lambda_i(X_s) ds}{t} = 0\] almost surely.
 The end of the proof is like  the proof of Theorem \ref{thm:persistence}. Details are left to the reader.

\subsection{Proof or Theorem \ref{thm:onemore}}
By the nondegeneracy assumption, there exists (see e.g~\cite{durrett-96b}, Theorem 3.8, Chapter 7) a continuous positive kernel $p_t(x,y)$ such that
\[P_t \psi(x) = \int p_t(x,y)\psi(y) dy.\]
Therefore, Theorem \ref{thm:persistence} applies to $P_t$ for any $t > 0.$

Let $\pi_t$ denote the unique positive invariant probability measure of $P_t$ for $t > 0.$ We claim that $\pi_t$ is independent of $t.$  Indeed,  $\pi_t$ is invariant for $P_{kt} = P_t^k$ for all $t > 0$ and $k \in \mathbb{N}.$
It follows that $\pi_{k/2^n}$  is independent of $k$ and $n$, and so, by the density of the dyadic rational numbers in the reals, $\pi_t = \pi$ for all $t > 0.$

Now, for any continuous bounded function $\psi$ and any $0 \leq s < 1$,
\[|P_{n + s}\psi(x) - \pi \psi| = |P_n (P_s\psi)(x) - \pi (P_s\psi)| \leq
\|P_n(x, .) - \pi\| ||P_s \psi||_{\infty}\] where  $\pi \psi$ stands for $\int \psi d\pi.$ Hence,
\[\lim_{n \to \infty} \|P_{n+s}(x, .) - \pi\|= 0 \]
 so assertion $(i)$ of the theorem holds.
The second assertion follows from the uniqueness of $\pi$ and Theorem \ref{thm:persistence2}.
\subsection{Proof of Proposition \ref{thm:robpersistence2}}
Suppose (\ref{eq:sdeperturb}) is a $\delta$-perturbation of (\ref{eq:sde})  with $X_0  = \tilde{X}_0 = x.$ Then for all $t \geq 0$
\begin{eqnarray*}
X_t - \tilde{X}_t &= &\int_0^t (X_s\circ F(X_s) - \tilde{X}_s\circ F(\tilde{X}_s)) ds + \int_0^t  (\tilde{X}_s\circ F(\tilde{X}_s) -\tilde{X}_s\circ  \tilde{F}(\tilde{X}_s)) ds +\\
&&\int_0^t (X_s\circ \Sigma(X_s) - \tilde{X}_s\circ \Sigma(\tilde{X}_s)) dB_s + \int_0^t  (\tilde{X}_s\circ \Sigma(\tilde{X}_s) - \tilde{X}_s\circ \tilde{\Sigma}(\tilde{X}_s)) dB_s\end{eqnarray*}
Let
\[v(t) = \E\left[\|X_t - \tilde{X}_t\|^2\right]. \] Then, by the Cauchy-Schwartz inequality, the fact that $\|X_s\|\le 1$, and the Ito isometry
 \begin{eqnarray*}
 v(t) &\leq &
  4 \E \left[ \| \int_0^t X_s\circ F(X_s) -\tilde{X}_s\circ  F(\tilde{X}_s)ds \|^2 + \|\int_0^t \tilde{X}_s\circ  F(\tilde{X}_s) - \tilde{X}_s\circ \tilde{F}(\tilde{X}_s) ds\|^2 \right. \\
 && \left.+  \|\int_0^t X_s\circ \Sigma(X_s) - \tilde{X}_s\circ \Sigma(\tilde{X}_s) dB_s \|^2+ \|\int_0^t  \tilde{X}_s\circ \Sigma(\tilde{X}_s) -\tilde{X}_s\circ  \tilde{\Sigma}(\tilde{X}_s) dB_s\|^2 \right]\\
 & \leq & 4  t \int_0^t \E [ \|X_s\circ F(X_s) - \tilde{X}_s\circ  F(\tilde{X}_s))\|^2 ]ds  +  4 t \int_0^t  \E [\|F(\tilde{X}_s) - \tilde{F}(\tilde{X}_s)\|^2] ds  \\
 &&+  4 \int_0^t \E[ \|X_s\circ\Sigma(X_s) - \tilde{X}_s\circ  \Sigma(\tilde{X}_s)\|^2 ]ds + 4 \int_0^t  \E[ \|\Sigma(\tilde{X}_s) - \tilde{\Sigma}(\tilde{X}_s)\|^2] ds \end{eqnarray*}
 Using the assumption and the Lipschitz continuity of $X\circ F(X)$ and $X\circ\Sigma(X)$ it follows that, for some constant $L,$
 \[v(t) \leq 4 t L \int_0^t v(s) ds + 4 t^2 \delta^2 + 4 L \int_0^t v(s) ds + 4 t \delta^2.\] Thus, for all $t \leq T$
 \[v(t) \leq A \int_0^t v(s) ds + B \delta^2\] where $A = 4 L (T+1)$ and $B = 4 T(T+1).$
 Hence, by Gronwall's lemma \[v(t) \leq e^{tA} B \delta^2\] for all $t \leq T.$
 The remainder of proof is  similar to the proof of \ref{thm:robpersistence}. The details are left to the reader.
\end{document}